\renewcommand\baselinestretch{1.5}
\newcommand {\ctn}{\citeasnoun} 
\numberwithin{figure}{section}
\numberwithin{equation}{section}
\theoremstyle{plain}
\newtheorem{theorem}{Theorem}[section]
\newtheorem{remark}{Remark}[section]
\newcommand{\bD}{\boldsymbol{D}}
\newcommand{\bI}{\boldsymbol{I}}
\newcommand{\bS}{\boldsymbol{S}}
\newcommand{\bt}{\boldsymbol{t}}
\newcommand{\bT}{\boldsymbol{T}}
\newcommand{\bE}{\boldsymbol{E}}
\newcommand{\bR}{\boldsymbol{R}}
\newcommand{\bW}{\boldsymbol{W}}
\newcommand{\bx}{\boldsymbol{x}}
\newcommand{\bX}{\boldsymbol{X}}
\newcommand{\bY}{\boldsymbol{Y}}
\newcommand{\bee}{\boldsymbol{e}}
\begin{document}
\renewcommand\baselinestretch{1.5}



\begin{frontmatter}
\title{Uncertainty in Test Score Data and Classically Defined Reliability of Tests $\&$ Test Batteries, using a New Method for Test Dichotomisation}

\runtitle{Reliability using Classical Definition}

\begin{aug}
\author{
{\fnms{Satyendra Nath} \snm{Chakrabartty}\thanksref{t3,m3}\ead[label=e3]{snc12@rediffmail.com}},
{\fnms{Kangrui} \snm{Wang}\thanksref{t4,m4}\ead[label=e4]{kw202@le.ac.uk}},
{\fnms{Dalia} \snm{Chakrabarty}\thanksref{t1,m1,m2}\ead[label=e1]{d.chakrabarty@warwick.ac.uk},\ead[label=e2]{dc252@le.ac.uk}}
},
\thankstext{t3}{Former Director of Indian Maritime University, Kolkata Campus, India} 
\thankstext{t4}{Ph.D student, Department of Mathematics, University of Leicester, U.K.} 
\thankstext{t1}{Lecturer of Statistics at Department of Mathematics,
  University of Leicester and Associate Research fellow at Department of Statistics,
  University of Warwick}

\runauthor{Chakrabartty, Wang $\&$ Chakrabarty}

\affiliation{}

\address{\thanksmark{m3}
Vivek Vihar\\
Noida, India\\
\printead*{e3}
}
\address{\thanksmark{m1} Department of Statistics\\
University of Warwick\\
Coventry CV4 7AL,
U.K.\\
\printead*{e1}\\
\and\\
\thanksmark{m2},\thanksmark{m4}
Department of Mathematics\\
University of Leicester \\
Leicester LE1 7RH,
U.K.\\
\printead*{e2}\\
\printead*{e4}
}

\end{aug}

\begin{abstract} {As with all measurements, the measurement of examinee 
ability, in terms of scores that the examinee obtains in a test, is
also error-ridden. The quantification of such error or uncertainty in
the test score data--or rather the complementary test reliability--is
pursued within the paradigm of Classical Test Theory in a variety of
ways, with no existing method of finding reliability, isomorphic to
the theoretical definition that parametrises reliability as the ratio
of the true score variance and observed (i.e. error-ridden) score
variance. Thus, multiple reliability coefficients for the same test
have been advanced. This paper describes a much needed method of
obtaining reliability of a test as per its theoretical definition, via
a single administration of the test, by using a new fast method of
splitting of a given test into parallel halves, achieving
near-coincident empirical distributions of the two halves. The method
has the desirable property of achieving splitting on the basis of
difficulty of the questions (or items) that constitute the test, thus
allowing for fast computation of reliability even for very large test
data sets, i.e. test data obtained by a very large examinee sample. An
interval estimate for the true score is offered, given an examinee
score, subsequent to the determination of the test reliability. This
method of finding test reliability as per the classical definition can
be extended to find reliability of a set or battery of tests; a method
for determination of the weights implemented in the computation of the
weighted battery score is discussed. We perform empirical illustration
of our method on real and simulated tests, and on a real test battery
comprising two constituent tests. }
\end{abstract}

\begin{keyword}
\kwd{Reliability}
\kwd{True score variance}
\kwd{Error variance}
\kwd{Battery of tests}
\end{keyword}

\end{frontmatter}

\section{Introduction}
Examinee ability is measured by the scores obtained by an examinee in
a test designed to assess such ability. As with all other
measurements, this ability measurement too is fundamentally
uncertain or inclusive of errors. In Classical Test Theory,
quantification of the complementary certainty, or reliability of a
test, is defined as the ratio of the true score variance and the
observed score variance i.e. reliability is defined as the proportion
of observed test score variance that is attributable to true
score. Here, the observed score is treated as inclusive of the
measurement error. This theoretical definition notwithstanding, there
are different methods of obtaining reliability in practice, and
problems arise from the implementation of these different techniques
even for the same test. Importantly, it is to be noted that the
different methods of estimating reliability coefficients differ
differently from the aforementioned classical or theoretical
definition of reliability. This can potentially result in different
estimates of reliability of a particular test even for the same
examinee sample. \ctn{berkowitz_2000} defined reliability as the
degree to which test scores for a group of test takers are consistent
over repeated applications of the test, and are
therefore inferred to be dependable and repeatable for an individual
test taker. In this framework, high reliability means that the test is
trustworthy. \ctn{jacobs_1991} and \ctn{satterly_1994} have opined
that another way to express reliability is in terms of the standard
error of measurement. \ctn{rudner_2002} mentioned that it is
impossible to calculate a reliability coefficient that conforms to the
theoretical definition of the ratio of true score variance and
observed score variance. \ctn{webb_2006} have suggested that the
theoretical reliability coefficient is not practical since true scores
of individuals taking the test are not known. In fact, none of the
existing methods of finding reliability is found to be isomorphic to
the classical definition.

In this paper we present a new methodology to compute the classically
defined reliability coefficient of a multiple-choice binary test,
i.e. a test, responses to the questions of which is either correct or
incorrect, fetching a score of 1 or 0 respectively. Thus, our method
gives the reliability as per the classical definition--thereby
avoiding confusion about different reliability values of a given
test. The method that we advance does not resort to multiple testing,
i.e. administering the same test multiple times to a given examinee
cohort. While avoiding multiple testing, this method has the important
additional benefit of identifying the way of splitting of the test, in
order to ensure that the split halves are equivalent (or parallel), so
that they approach maximal correlation and therefore maximal
split-half reliability. We offer an interval estimate of the true
score of any individual examinee whose obtained score is known.

This paper is organised as follows. In the following section, we
discuss the methods of finding reliability as found in the current
literature. In Section~\ref{sec:batt_rel}, we present a similar
discussion, but this time, in regard to a set or battery of tests. Our method
of finding the reliability as per the classical definition, is
expounded upon in Section~\ref{sec:method}; this is based upon a novel
method of splitting a test into two parallel halves using a new
iterative algorithm that is discussed in Section~\ref{sec:dichot}
along with the benefits of this iterative scheme. Subsequently, we
proceed to discuss the computation of the reliability of a battery of
tests using summative scores and weighted scores
(Section~\ref{sec:weights}). Empirical verification of the presented
methodology is undertaken using four sets of simulated data (in
Section~\ref{sec:simulated}). Implementation of the method to real data
is discussed in Section~\ref{sec:real}. We round up the paper in
Section~\ref{sec:conclu}, by recalling the salient outcomes of the
work.

\section{Currently available methods}
\label{sec:lit}
In the multiple testing (or the test-retest) approach, the main
concern is that the correlation between test scores coming from
distinct administrations, depends on the time gap between the
administrations, since the sample may learn or forget something in
between and also get acquainted to the test. Thus, different values of
reliability can be obtained depending on the time gap
\cite{meadows_2005}. Also, test-retest reliability primarily reflects
stability of scores and may vary depending on the homogeneity of
groups of examinees \cite{gualtieri_2006}.

In the split half method,
the test is divided into two sub-tests, i.e. the whole set of questions is divided into two sets of questions and the entire instrument is administered to one
examinee sample. The split-half reliability estimate is the
correlation between scores of these two parallel halves, though
researchers like \ctn{kaplan_2001} recommend finding
reliability of the entire test using the Spearman-Brown formula. Value
of the split-half reliability depends on how the test is
dichotomised. We acknowledge that a test consisting of 2$\eta$-number
of items can be split in half in $\displaystyle{2\eta\choose\eta}$
number of ways and that each method of splitting will
imply a distinct value of reliability in general. In this context, it
merits mention that it is of crucial importance to identify that way
of splitting that ensures that the split halves are parallel and
the split-half reliability is maximum. 

In attempts to compute reliability based on ``internal consistency'',
the aim is to quantify how well the questions in the test, or test
``items'', reflect the same construct. These include the usage of
Cronbach's Alpha ($\alpha$): the average of all possible split-half
estimates of reliability, computed as a function of the ratio of the
total of the variance of the items and variance of the examinee
scores. Here, the ``item score'' is the total of the scores obtained
in that item by all the examinees who are taking the test, while an
examinee score is the total score of such an examinee, across all the
items in the test.
$\alpha$ can be shown to
provide a lower bound for the reliability under certain
assumptions. Limitations of this method have been reported by
\ctn{eisinga_2012} and \ctn{ritter_2010}. \ctn{panayides_2013}
observed that 
high values of $\alpha$ do not necessarily mean higher reliability
and better quality of scales or tests. In fact very high values of
$\alpha$ could be indicating lengthy scales, a narrow coverage of the
construct under consideration and/or parallel items
\cite{boyle_1991,kline_1979} or redundancy of items 
\cite{streiner_2003}.

In another approach, referred to as the ``parallel forms'' approach,
the test is divided into two forms or sub-tests by randomly selecting
the test items to comprise either sub-test, under the assumption that
the randomly divided halves are parallel or equivalent. Such a demand
of parallelity requires generating lots of items that reflect the same
construct, which is difficult to achieve in practice. This approach is
very similar to the split-half reliability described above, except
that the parallel forms constructed herein, can be used independently
of each other and are considered equivalent. In practice, this
assumption of strict parallelity is too restrictive.

Each of the above method of estimation of reliability has certain
advantages and disadvantages. However, the estimation of the
reliability coefficient under each such method deviates differently
from the theoretical definition and consequently, gives different
values for reliability of a single test. In general, test-retest
reliability with significant time gap is lower in value than the
parallel forms reliability and reliability computed under the model of
internal consistency amongst the test items. 
To summarise, there is a potency for confusion over the
trustworthiness of a test, emanating out of the inconsistencies
amongst the different available methods that are implemented to
compute reliability coefficients. This state of affairs motivates a
need to estimate reliability in an easily reproducible way, from a
single administration of the test, using the theoretical definition of
reliability i.e. as a ratio of true score variance and observed score
variance.

\subsection{Battery reliability}
\label{sec:batt_rel}
\noindent
\ctn{bock_1966} derived reliability of a battery using Multivariate
Analysis of Variance (MANOVA), where $r$ number of parallel forms of a
test battery were given to each individual in the examinee sample. It
can be proved that average canonical reliabilities from MANOVA
coincides with the canonical reliability for the Mahalanobis distant
function.
\ctn{wood_1987} compared three multivariate models
(canonical reliability model, maximum generalisability model,
canonical correlation model) for estimating reliability of test
battery and observed that the maximum generalisability model showed
the least degree of bias, smallest errors in estimation, and the
greatest relative efficiency across all experimental
conditions. \ctn{ogasawara_2009} considered four estimators of the
reliability of a composite score based on a factor analysis approach,
and five estimators of the maximal reliability for the composite
scores and found that the same estimators are obtained when Wishart
Maximum Likelihood estimators are used. \ctn{conger_1973}
considered canonical reliability as the ratio of the average squared
distance among the true scores to the average squared distance among
the observed scores. They thereby showed that the canonical
reliability is consistent with multivariate analogues of parallel
forms of correlation, squared correlation between true and observed
scores and an analysis of variance formulation. They opined that the
factors corresponding to the small eigenvalues might show significant
differences and should not be discarded. Equivalences amongst
canonical factor analysis, canonical reliability and principal
component analysis were studied by \ctn{conger_1976}; they
found that if variables are scaled so that they have equal measurement
errors, then canonical factor analysis on all non-error variance,
principal component analysis and canonical reliability analysis, give
rise to equivalent results.

In this paper, the objective is to present a methodology for obtaining
reliability as per its theoretical definition from a single
administration of a test, and to extend this methodology to find ways
of obtaining reliability of a battery of tests.

\section{Methodology}
\label{sec:method}
\noindent
Consider a test consisting of $n$ items, administered to $N$
persons. The test score of the $i$-th examinee is $X_i$,
$i=1,\ldots,N$ and $\bX =(X_1, X_2, \ldots, X_N)^T$ is referred to as
the test score vector. The vector $\bI = (I_1, I_2, \ldots,I_N)^T$
depicts the maximum possible scores for the test where $I_i = n$
$\forall i =1,2,\ldots,N$. Here $\bX,\bI\in{\cal X}\subset {\mathbb
  R}^N$ where we refer to ${\cal X}$ as the $N$-dimensional person
space. Sorting the components of $\bX$ in increasing order will
produce a merit list for thesample of examinees, in terms of the
observed scores and thereby help infer the ability distribution of the
sample of examinees. However, parameters of the test can be obtained
primarily in terms of the 2-norm of the vectors $\bX$ and $bI$,
i.e. $\parallel\bX\parallel$, $\parallel\bI\parallel$, and the angle
$\theta_X$ between these two vectors.

Mean of the test scores ${\bar{X}}$ is given as follows. Since
$\cos\theta_X = \displaystyle{\frac{\sum_i^N X_i I_i}{\parallel\bX\parallel \parallel\bI\parallel}}\Longrightarrow \displaystyle{ {\sum_i^N X_i}}/N = 
\displaystyle{\frac{\parallel\bX\parallel \cos\theta_X}{\sqrt{N}}}={\bar{X}}$.
Similarly, mean of the true score vector $\bT$ is
${\bar{T}} = \displaystyle{\frac{\parallel\bT\parallel \cos\theta_T}{\sqrt{N}}}$ where $\theta_T$ is the angle
between the true score vector and the ideal score vector $\bI$.

Test variance $S_X^2$ can also be
obtained in this framework.  For $N$ persons who take the test,
$S_X^2 = \displaystyle{\frac{1}{N}\parallel\bx\parallel^2}
\Longrightarrow S_X = \displaystyle{\frac{1}{\sqrt{N}}\parallel\bx\parallel}$ 
where $x_i=\displaystyle{X_i-{\bar{X}}}$ is the $i$-th deviation
score, $i=1,2,\ldots,N$. The
relationship between the norm of deviation score and norm of test score can be easily derived as
$\parallel\bx\parallel^2 = \parallel\bX\parallel^2 \sin^2\theta_X$
Similarly, 
$\parallel\bt\parallel^2 = \parallel\bT\parallel^2 \sin^2\theta_T$
where $t_i=\displaystyle{T_i-{\bar{T}}}$.

Using this relationship between deviation and test scores, test variance can be
rephrased as
\begin{equation}
S_X^2 = \displaystyle{\frac{\parallel \bX\parallel^2\sin^2\theta_X}{N}}.\quad{\textrm{Also}}\quad
S_T^2 = \displaystyle{\frac{\parallel \bT\parallel^2\sin^2\theta_T}{N}}.
\label{eqn:variance}
\end{equation}

The aforementioned relationship between test and deviation scores gives
$\sin^2\theta_X = \displaystyle{\frac{\parallel\bx\parallel^2}{\parallel\bX\parallel^2}}$ and
$\sin^2\theta_T = \displaystyle{\frac{\parallel\bt\parallel^2}{\parallel\bT\parallel^2}}$.  		                        

At the same time, in Classical Test Theory, the means of the observed
and true scores coincide, i.e. ${\bar{X}}={\bar{T}}$, as in this
framework, the observed score is represented as the sum of true score
(obtained under the paradigm of no errors) and the error, where the
error is assumed to be distributed as a normal with zero mean and
variance referred to as the error variance. Then recalling the
definition of these sample means, we get
\begin{equation}
\cos\theta_T = \displaystyle{\frac{\parallel
    X\parallel}{\parallel T\parallel} \cos\theta_X}.
\label{eqn:theta_T_X}
\end{equation}

This gives the relationship amongst $\parallel\bX\parallel$,
$\parallel\bT\parallel$, $\theta_T$, $\theta_X$.  The stage is
now set for us to develop the methodology for computing reliability
using the classical definition.

\section{Our method}

\subsection{Background}
\label{sec:reliability}
\noindent
Reliability, $r_{tt}$, of a test is defined clasically as $r_{tt}=\displaystyle{\frac{S_T^2}{S_X^2}}$.  
Thus, to know $r_{tt}$  one needs to know the value of the true score variance or the error variance.
For this, let us concentrate on parallel tests. As per the classical definition, two tests ``$g$'' and ``$h$'' are parallel if 	
\begin{equation}
T_{i}^{(g)}  =  T_{i}^{(h)}\quad{\textrm{and}}\quad   S_{e}^{(g)} = S_{e}^{(h)}\quad i=1,2,\ldots,N  
\label{eqn:ssaresame}       
\end{equation}  		
where the superscript ``$g$'' refers to sub-test $g$ and superscript ``$h$'' to sub-test $h$, and $S_{e}^{(p)}$ is the standard deviation of the
error scores in the $p$-th sub-test, $p=g,h$. Thus, $T_i^{(g)}$ is the true score of the $i$-th examinee in the $g$-th test and $T_i^{(h)}$ that in the $h$-th test. Similarly, the observed scores of the $i$-th examinee, in the two tests, are $X_i^{(g)}$ and $X_i^{(g)}$.

From here on, we interpret ``g'' and ``h'' as the two parallel
sub-tests that a given test is dichotomised into. This implies that
the observed score vectors of these two parallel tests can be
represented by two points $\bX_g$ and $\bX_h$, in ${\cal X}$ such that
$\bX_g = \bT_g + \bE_g$ and $\bX_h = \bT_h + \bE_h$ in the paradigm of
Classical Test Theory, where $\bE_p$ is the error score in the $p$-th
test, $p=g, h$. (Here $\bX_p=(X_1^{(p)},\ldots,X_N^{(p)})$,
$p=g,h$). Now, recalling that for parallel tests $g$ and $h$,
$T_i^{(g)}=T_i^{(h)}\forall i=1,\ldots,N\Longrightarrow
\bT_g=\bT_h\Longrightarrow \bX_g-\bX_h = \bE_g-\bE_h$, so that
\begin{equation}
\parallel \bX_g\parallel^2 + \parallel \bX_h\parallel^2 - 2\parallel \bX_g\parallel \parallel \bX_h\parallel\cos\theta_{gh} = \parallel \bE_g\parallel^2 + \parallel \bE_h\parallel^2 - 2\parallel \bE_g\parallel \parallel \bE_h\parallel\cos\theta^{(E)}_{gh}
\label{eqn:1sstep}
\end{equation}     
where $\theta_{gh}$ is the angle between $\bX_g$ and $\bX_h$ while
$\theta^{(E)}_{gh}$ is the angle between $\bE_g$ and $\bE_h$. But, the correlation of the errors in two parallel tests is zero (follows from equality of standard deviation of error sores in the parallel sub-tests and equality of means of the error scores--see Equation~\ref{eqn:theta_T_X}). The geometrical interpretation of this is that the error vectors of the two parallel tests are orthogonal, i.e. $\cos\theta^{(E)}_{gh}=0$. Then Equation~\ref{eqn:1sstep} can be written as\\
\vspace{-.5cm}
\begin{center}
$\parallel \bX_g\parallel^2 + \parallel \bX_h\parallel^2 - 2\parallel \bX_g\parallel \parallel \bX_h\parallel\cos\theta_{gh}$
\end{center}
\begin{equation}
=\parallel \bE_g\parallel^2 + \parallel \bE_h\parallel^2 = N(S_e^{(g)})^2 + N(S_e^{(h)})^2 = 2N(S_e^{(g)})^2 
\label{eqn:2ndstep}
\end{equation}     
where we have used equality of error variances of parallel tests in
the last step. Equations~\ref{eqn:2ndstep} can be employed to find the
value of $S_e^{(g)}$ from the data (on observed scores). In other words, we can use the
available test score data in this equation to achieve the error
variance of either parallel sub-test that a test can be dichotomised
into. Alternatively, if two parallel tests exist, then
Equations~\ref{eqn:2ndstep} can be invoked to compute the error
variance in either of the two parallel tests. 


Now, Equations~\ref{eqn:2ndstep} suggest that the error variance $(S_e^{(test)})^2$ of the entire test is 
\begin{equation}
(S_e^{(test)})^2 = 2(S_e^{(g)})^2 = \displaystyle{\frac{\parallel \bX_g\parallel^2 + \parallel \bX_h\parallel^2 - 2\parallel \bX_g\parallel \parallel \bX_h\parallel\cos\theta_{gh}}{N}}
\label{eqn:finally}
\end{equation}
Then recalling that in the paradigm of Classical Test Theory, the true score is by definition independent of the error, it follows that the observed score variance $S_X^2$ is sum of true score variance $S_T^2$ and error variance $(S_e^{(test)})^2$ the classical definition of reliability gives
\vspace{-.2cm}
\begin{center}
$r_{tt} = \displaystyle{\frac{S_T^2}{S_X^2}} = \displaystyle{1 - \frac{(S_e^{(test)})^2}{S_X^2}}= $
\end{center}
\begin{equation}
1 - \displaystyle{\frac{\parallel \bX_g\parallel^2 + \parallel \bX_h\parallel^2 - 2\parallel \bX_g\parallel \parallel \bX_h\parallel\cos\theta_{gh}}{N S_X^2}} 
= 1 - \displaystyle{\frac{\parallel \bX_g\parallel^2 + \parallel \bX_h\parallel^2 - 2\sum_{i=1}^N X_i^{(g)}X_i^{(h)}}{N S_X^2}} 
\label{eqn:3rdstep}
\end{equation}
As $\parallel\bX_g\parallel \parallel\bX_h\parallel\cos\theta_{gh} = \displaystyle{\sum_{i=1}^N X_i^{(g)}X_i^{(h)}}$, Equation~\ref{eqn:3rdstep} can be simplified to give
\begin{equation}
r_{tt} = 1 - \displaystyle{\frac{2\parallel X_g\parallel^2 - 2\sum_{i=1}^N X_i^{(g)}X_i^{(h)}}{N S_X^2} }
\label{eqn:4thstep}
\end{equation}
since $\parallel\bX_g\parallel= \parallel\bX_h\parallel$ for parallel
tests, given that ${\bar{X}}_g={\bar{X}}_h$ and $S_e^{(g)}=
S_e^{(h)}$.  Equation~\ref{eqn:3rdstep} and Equation~\ref{eqn:4thstep}
give a unique way of finding reliability of a test from a single
administration, using the classical definition of reliability as long
as dichotomisation of the test is performed into two parallel
sub-tests.  Importantly, in this framework, the hypothesis of equality
of error variances of the two halves of a given test can be tested
using an F-test. 
In addition, the method also provides a way to
estimate true scores from the data. We discuss this
later in this section.

\subsection{Proposed method of split-half}
\label{sec:dichot}
\noindent
\ctn{chakrabartty_2011} gave a method for splitting a test into 2
parallel halves. Here we give a novel method of splitting a test into
2 parallel halves--$g$ and $h$--that have nearly equal means and
variances of the observed scores. The splitting is initiated by the
determination of the item-wise total score for each item. So let the
$j$-th item in the test have the item-wise score
$\tau_j:=\displaystyle{\sum\limits_{i=1}^N X_i^{(j)}}$, where
$X_i^{(j)}$ is the $i$-th examinee's score in the $j$-th item,
$j=1,\ldots,n$. Our method of splitting is as follows.

\begin{enumerate}
\item[Step-I] The item-wise scores are sorted in an ascending order
  resulting in the ordered sequence
  $\tau_1,\tau_2,\ldots,\tau_n$. Following this, the item with the
  highest total score is identified and allocated to the $g$-th
  sub-test. The item with second highest total score is then allocated
  to the $h$-th test, while the item with the third highest score is
  assigned to $h$-th test and the fourth highest to the $g$-th test,
  and so on. In other words, allocation of items is performed to
  ensure realisation of the following structure.
\begin{center}{
\begin{tabular}{l l l }
{\underline{sub-test $g$}} &
 {\underline{sub-test $h$}} & {\underline{difference in item-wise scores of 2 sub-test}} \\
$\tau_1$ & $\tau_2$ & $\tau_1 - \tau_2 \geq 0$  \\
$\tau_4$ & $\tau_3$ & $\tau_4 - \tau_3 \leq 0$ \\
$\vdots$ & $\vdots$ & $\vdots$ \\
\end{tabular}}
\end{center}
where we assume $n$ to be even; for tests with an odd number of items,
we ignore the last item for the purposes of dichotomisation. The
sub-tests obtained after the very first slotting of the sequence
$\{\tau_k\}_{k=1}^n$ into the sub-tests, following this suggested
method of distribution, is referred to as the ``seed sub-tests''.
\item[Step-II] Next, the difference of item-wise scores in every item
  of the $g$-th and $h$-th sub-tests is recorded and the sum ${\cal
    S}$ of these differences is computed (total of column 3
  in the above table). If the value of ${\cal S}$ is zero,
  we terminate the process of distribution of items across the 2
  sub-tests, otherwise we proceed to the next step.
\item[Step-III] We identify rows in the above table, the swapping of
  the entries of columns 1 and 2 of which, results in the reduction of
  $\vert{\cal S}\vert$, where $\vert\cdot\vert$ denotes absolute
  value. Let the row numbers of such rows be $\rho^{(\star_\ell)}$,
  $\ell=1,2,\ldots,n^{(\star)}$ where $n^{(\star)}\leq n/2$. We swap the
  $\rho^{(\star_\ell)}$-th item of the $g$-th sub-test with the
  $\rho^{(\star_\ell)}$-th item of the $h$-th sub-test and
  re-calculate sum of the entries of the revised $g$-th sub-test and
  $h$-th sub-test. If the revised value of $\vert{\cal S}\vert$ is
  zero or a number close to zero that does not reduce upon further
  iterations, we stop the iteration; otherwise we return to the
  identification of the row numbers $\rho^{(\star_\ell)}$ and proceed
  therefrom again.
\end{enumerate}

We considered other methods of splitting of the test as well,
including methods in which swapping of items of the two sub-tests is
allowed not just for a given row, but also across rows, and
minimisation of the sum of the differences between the items in the
$g$-th and $h$-th sub-tests is not the only criterion. To be precise,
we conduct methods of dichotomisation in which in Step-I we compute
the sum ${\cal S}$ of the differences between the item scores in the 2
sub-tests, as well as consider the sum ${\cal S}_{sq}$ of differences
of squares of the item scores in the $g$-th and $h$-th sub-tests. We
then identify row numbers $\rho^{(\star_1)}$ and $\rho^{(\star_2)}$ in
the above table, such that swapping the item in row $\rho^{(\star_1)}$
of the $g$-th sub-test with the entry in row $\rho^{(\star_2)}$ of the
$h$-th sub-test results in a reduction of $\vert {\cal S}\vert\times
\vert{\cal S}_{sq}\vert$; $\rho^{(\star_1)}, \rho^{(\star_1)}
\leq n/2$. When this product can no longer be reduced over a chosen
number of iterations, the scheme is stopped, otherwise the search for
the parallel halves that result in the minimum value of this product,
is maintained. However, parallelisation obtained with this method of
splitting that seeks to minimise $\vert {\cal S}\vert\times\vert{\cal
  S}_{sq}\vert$ was empirically found to yield similar results as with
the method that seeks to minimise $\vert {\cal S}\vert$. Here by
``similar results'' is implied dichotomisation of the given test into
two sub-tests, the means and variances of which are equally close to
each other in both methods, i.e. sub-tests are approximately equally
parallel.  The reason for such empirically noted similarity is
discussed in the following section. Given this, we advance the method
enumerated above in Steps-I to III, as our chosen method of
dichotomisation.

By the mean (or variance) of a sub-test, is implied mean (or variance)
of the vector of examinee scores in the $n/2$ items that comprise that
sub-test, i.e. the mean (or variance) of the score vector of the
sub-test. Thus, if the item numbers $g_1,\ldots,g_{n/2}$ comprise the
$g$-th sub-test, $g_j\in\{1,2,\ldots,n\},\:j=1,\ldots,n/2$, then the
$N$-dimensional score vector of the $g$-th sub-test is
$\bX^{(g)}:=(X_1^{(g)}, \ldots, X_N^{(g)})^T$, where
$X_i^{(g)}=\displaystyle{\sum\limits_{j=1}^{{n/2}} X_i^{(g_j)}}$, i.e. the score acieved by the $i$-th examinee across all the items that are included in the $g$-th sub-test; this is to be distinguished from $\tau_j$--the score in the $j$-th item, summed over all examinees. Given that $X_i^{(j)}$ is either 0 or 1, we get that $X_i^{(g)}\leq {n/2}$ and $\tau_j\leq N,\:\forall\:i, j$. We
similarly define the score-vector $\bX^{(h)}$ of the $h$-th sub-test.

Also, in he following section we identify score in the $j$-th item
that is constituent of the $g$-th sub-test as $\tau_j^{(g)},
j=1\ldots,n/2$, Similarly we define $\tau_j^{(h)}$.

\begin{remark}
\label{remark:remark1}
The order of our algorithm is independent of the examinee number and driven by the number of items in each sub-test that the test of $n$ items is dichotomised into; the order is ${\cal O}((n/2)^2)$. 
\end{remark}

\subsection{Benefits of our method of parallelisation}
\label{sec:benefits_parallel}
\noindent 


\begin{theorem}
\label{theorem:means}
Difference between means of $g$-th and $h$-th sub-tests is minimised.
\end{theorem}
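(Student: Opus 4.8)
The plan is to reduce the statement to one exact identity: the difference between the two sub-test means equals $1/N$ times the quantity ${\cal S}$ that Steps~I--III of the algorithm are explicitly built to minimise in absolute value. So the content of the theorem is not really an optimisation argument, but the observation that the algorithm's objective \emph{is} a fixed positive multiple of the mean difference, so that driving one down drives the other down by the same factor.

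First I would unwind the definitions. By construction of the $g$-th sub-test, $X_i^{(g)}=\sum_{j=1}^{n/2} X_i^{(g_j)}$, so summing over the $N$ examinees and interchanging the two finite sums gives $\sum_{i=1}^{N} X_i^{(g)} = \sum_{j=1}^{n/2}\sum_{i=1}^{N} X_i^{(g_j)} = \sum_{j=1}^{n/2}\tau_j^{(g)}$, and likewise $\sum_{i=1}^{N} X_i^{(h)} = \sum_{j=1}^{n/2}\tau_j^{(h)}$. Dividing by $N$, the sub-test means are $\bar{X}^{(g)} = N^{-1}\sum_{j=1}^{n/2}\tau_j^{(g)}$ and $\bar{X}^{(h)} = N^{-1}\sum_{j=1}^{n/2}\tau_j^{(h)}$. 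Since the rows of the allocation table partition the retained items into those assigned to $g$ and those assigned to $h$, the third-column total is ${\cal S} = \sum_{j=1}^{n/2}\tau_j^{(g)} - \sum_{j=1}^{n/2}\tau_j^{(h)}$, whence $\bar{X}^{(g)} - \bar{X}^{(h)} = {\cal S}/N$ and therefore $|\bar{X}^{(g)} - \bar{X}^{(h)}| = |{\cal S}|/N$.

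The theorem then follows: ${\cal S}\mapsto {\cal S}/N$ is a fixed linear bijection not depending on the allocation, so every admissible swap in Step~III that lowers $|{\cal S}|$ lowers $|\bar{X}^{(g)} - \bar{X}^{(h)}|$ by exactly the same factor, and the allocation returned when the iteration halts is one at which $|{\cal S}|$ — equivalently, the mean difference — can no longer be reduced by the moves the algorithm permits. I do not foresee a genuine obstacle; the only points needing care are the routine interchange of the double finite sum and keeping the two indexings apart ($\tau_j$ is an item score summed over examinees, whereas $X_i^{(g)}$ is an examinee score summed over items), together with being explicit about the sense of ``minimised''. Writing $d_k:=\tau_{2k-1}-\tau_{2k}\ge 0$ for the $k$-th row gap, the values of ${\cal S}$ reachable by within-row swaps are the signed sums $\sum_k \varepsilon_k d_k$, $\varepsilon_k\in\{\pm1\}$, and the iteration terminates at a sign pattern that minimises their modulus over this search (a local, and empirically essentially global, minimum); it is this quantity, not an optimum over all $\binom{n}{n/2}$ conceivable splits, that the theorem should be read as asserting is minimised, since the row pairing is itself a restriction on the splits considered.
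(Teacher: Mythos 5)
Your proposal is correct and follows essentially the same route as the paper's own proof: both reduce the statement to the exact identity $|\bar{X}_g - \bar{X}_h| = |{\cal S}|/N$ obtained by interchanging the sums over items and examinees, so that minimising $|{\cal S}|$ is the same as minimising the mean difference. Your added remark that the minimisation is over the swaps the algorithm permits (rather than over all $\binom{n}{n/2}$ splits) is a useful clarification of the sense of ``minimised,'' but it does not change the argument.
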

\begin{proof}
Let the score vectors of the $g$-th and $h$-th sub-tests be
$\bX^{(g)}:=(X_1^{(g)}, \ldots, X_N^{(g)})^T$ and 
$\bX^{(h)}:=(X_1^{(h)}, \ldots, X_N^{(h)})^T$, 
where
$X_i^{(\cdot)}=\displaystyle{\sum\limits_{j=1}^{{n/2}} X_i^{(\cdot_j)}}.$
Now, mean of the $g$-th sub-test is 
${\bar{X}}_g=\displaystyle{\frac{\sum\limits_{i=1}^{N} X_i^{(g)}}{N}}$ 
and mean of the $h$-th sub-test is 
${\bar{X}}_h=\displaystyle{\frac{\sum\limits_{i=1}^{N} X_i^{(h)}}{N}}$.

Let item score of the $j$-th item in the $g$-th sub-test be $\tau_j^{(g)}$, $j=1,\ldots,n/2$. Similarly, let score of $j$-th item in the $h$-th sub-test be $\tau_j^{(h)}$.

But, sum of item scores in all $n/2$ items in a sub-test, is equal to sum of examinee scores achieved in these $n/2$ items, i.e.
\begin{eqnarray}
\displaystyle{\sum\limits_{i=1}^{N} X_i^{(g)}} &= & \displaystyle{\sum\limits_{j=1}^{n/2} \tau_j^{(g)}}\nonumber\\
\displaystyle{\sum\limits_{i=1}^{N} X_i^{(h)}} &= & \displaystyle{\sum\limits_{j=1}^{n/2} \tau_j^{(h)}}
\label{eqn:item-ex}
\end{eqnarray}

Now, by definition, 
\begin{eqnarray}
\vert{\cal S}\vert &=& \displaystyle{{\Big\vert} \left[\tau_1^{(g)}-\tau_1^{(h)}\right] + \ldots + \left[\tau_{n/2}^{(g)}-\tau_{n/2}^{(h)}\right]{\Big\vert}} \\ \nonumber 
         &=& \displaystyle{{\Big\vert}\left[\tau_1^{(g)}+\ldots+\tau_{n/2}^{(g)}\right] - 
             \left[\tau_1^{(h)}+\ldots+\tau_{n/2}^{(h)}\right]{\Big\vert}}. \nonumber
\end{eqnarray}  
At the end of the splitting of the test, let $\vert{\cal S}\vert=\epsilon$. Then $\epsilon$ is the minimum value of $\vert{\cal S}\vert$ by our method.
 
Then using 
$$\displaystyle{{\Big\vert}\left[\tau_1^{(g)}+\ldots+\tau_{n/2}^{(g)}\right] - 
             \left[\tau_1^{(h)}+\ldots+\tau_{n/2}^{(h)}\right]{\Big\vert}} =\epsilon$$
in Equations~\ref{eqn:item-ex}, we get
\begin{equation}
\displaystyle{{\Big\vert}\sum\limits_{i=1}^{N} X_i^{(g)} - \sum\limits_{i=1}^{N} X_i^{(h)}{\Big\vert}} 
= \displaystyle{{\Big\vert}\sum\limits_{j=1}^{n/2} \tau_j^{(g)} - \sum\limits_{j=1}^{n/2} \tau_j^{(h)}{\Big\vert}} = \epsilon
\label{eqn:inside}
\end{equation}

Then ${\Big\vert}{\bar{X}}_g-{\bar{X}}_h{\Big\vert} = \displaystyle{{\Bigg\vert}\frac{\sum\limits_{i=1}^{N} X_i^{(g)}}{N} -\frac{\sum\limits_{i=1}^{N} X_i^{(h)}}{N}{\Bigg\vert}}= \epsilon/N$ (using Equation~\ref{eqn:inside}), i.e. difference between means is minimised by our method. 
\end{proof}

\vspace{1cm}
\begin{remark}
In our work, by ``near-equal means'', we imply means of the sub-tests, the difference between which is minimised using our method; typically, this minimum value is close to 0. Thus, the means of the two sub-tests, are near-equal.
\end{remark}

\vspace{1cm}
\begin{theorem}
\label{theorem:vars}
Absolute difference between sums of squares of examinee scores in the $g$-th and $h$-th sub-tests is of the order of $\epsilon^2$, if absolute difference between sums of examinee scores is $\epsilon$.
\end{theorem}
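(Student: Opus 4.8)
The plan is to recast the statement as a difference-of-squares identity and then feed in the only quantity that the splitting procedure actually controls, namely $\big|\sum_{i=1}^N X_i^{(g)}-\sum_{i=1}^N X_i^{(h)}\big|=\epsilon$ (Theorem~\ref{theorem:means} and Equation~\ref{eqn:inside}). Put $d_i:=X_i^{(g)}-X_i^{(h)}$, so that $\big|\sum_{i=1}^N d_i\big|=\epsilon$, and use that for each examinee the two sub-test scores partition the whole test score, $X_i^{(g)}+X_i^{(h)}=X_i$. Then
\[
\sum_{i=1}^N (X_i^{(g)})^2-\sum_{i=1}^N (X_i^{(h)})^2=\sum_{i=1}^N (X_i^{(g)}-X_i^{(h)})(X_i^{(g)}+X_i^{(h)})=\sum_{i=1}^N d_i X_i ,
\]
so everything reduces to showing $\big|\sum_{i=1}^N d_i X_i\big|=O(\epsilon^2)$, given only that $\sum_{i=1}^N d_i=\pm\epsilon$.

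Next I would bring in the structure created by Steps~I--III, which is cleanest at the level of the item totals. Writing $\delta_j:=\tau_j^{(g)}-\tau_j^{(h)}$ and $\sigma_j:=\tau_j^{(g)}+\tau_j^{(h)}$ for the signed gap and the pair-sum in row $j$, Equation~\ref{eqn:item-ex} and the definition of ${\cal S}$ give $\sum_{j=1}^{n/2}\delta_j=\pm\epsilon$, while the analogous difference of sums of squares of the item totals is $\sum_{j=1}^{n/2}\delta_j\sigma_j$. The useful features are that (i) every row of the Step~I table is a pair of items adjacent in the sorted list $\tau_1\ge\cdots\ge\tau_n$, so each $|\delta_j|$ is a single gap of that list; (ii) Step~III only ever exchanges the two members of one row, so the pair-sums $\sigma_j$ are invariant for the whole run while the signs of the $\delta_j$ are exactly what is being tuned; and (iii) at termination no single-row flip reduces $|{\cal S}|$, which pins the surviving signed gaps down to being small and sign-aligned with the residual imbalance.

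The crux --- and the step I expect to be the real obstacle --- is to convert these structural facts into the statement that $\big|\sum_j\delta_j\sigma_j\big|$, and hence $\big|\sum_i d_i X_i\big|$, is quadratic rather than merely linear in $\epsilon$. Expanding about the balanced configuration, $\tau_j^{(g)}=m_j+\tfrac12\delta_j$ and $\tau_j^{(h)}=m_j-\tfrac12\delta_j$ with $m_j=\tfrac12\sigma_j$, the difference of sums of squares equals $2\sum_j m_j\delta_j$, whose nominal size is only $O(\epsilon\max_j m_j)$; one has to argue that the configurations which would realise this linear order are precisely those excluded by the stopping rule, so that after pairing the residual gaps against one another the sum $\sum_j m_j\delta_j$ collapses to a remainder controlled by $\big(\sum_j|\delta_j|\big)^2=O(\epsilon^2)$. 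Feeding the resulting $O(\epsilon^2)$ bound back through the first identity finishes the proof; dividing by $N$ then yields the companion statements for the sub-test variances and justifies replacing Equation~\ref{eqn:3rdstep} by Equation~\ref{eqn:4thstep}.
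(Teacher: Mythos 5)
Your opening identity is fine, and it is a genuinely different starting point from the paper's: you factor the examinee-level difference of sums of squares as $\sum_i d_i X_i$ with $d_i = X_i^{(g)}-X_i^{(h)}$ and $X_i^{(g)}+X_i^{(h)}=X_i$, whereas the paper expands $\bigl(\sum_i X_i^{(g)}\bigr)^2$ into a diagonal part plus cross terms $\sum_{i\neq k}X_i^{(g)}X_k^{(g)}$, approximates those cross terms via Bernoulli success probabilities (relative frequencies), and argues that they nearly cancel between the two sub-tests. However, your argument has two genuine gaps. First, the pivot from the examinee-level quantity $\sum_i d_i X_i$ to the item-level quantity $\sum_j \delta_j\sigma_j$ is a non sequitur: $\sum_i \bigl(X_i^{(g)}\bigr)^2=\sum_i\sum_{j}\sum_{j'}X_i^{(g_j)}X_i^{(g_{j'})}$ couples different items answered by the \emph{same examinee}, while $\sum_j\bigl(\tau_j^{(g)}\bigr)^2=\sum_j\sum_{i}\sum_{i'}X_i^{(g_j)}X_{i'}^{(g_j)}$ couples different examinees on the \emph{same item}; neither controls the other, and the theorem is explicitly about the former. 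The splitting algorithm constrains only the item totals $\tau_j^{(g)},\tau_j^{(h)}$, so some additional assumption about how item scores aggregate across examinees is unavoidable --- this is exactly where the paper inserts its Bernoulli/relative-frequency step --- and your reduction silently discards that need.

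Second, even at the item level the decisive bound is asserted rather than proved, as you yourself flag. The stopping rule only guarantees that no single within-row swap decreases $\vert{\cal S}\vert$; it does not force $\sum_j m_j\delta_j$ to collapse to $O\bigl((\sum_j\vert\delta_j\vert)^2\bigr)$. A terminal configuration can contain several large $\delta_j$ of opposite signs that cancel in $\sum_j\delta_j=\pm\epsilon$ yet sit in rows with well-separated midpoints $m_j$ (the rows come from different strata of the sorted list), so that $\sum_j m_j\delta_j$ is of size comparable to $\max_j\vert\delta_j\vert$ times the spread of the $m_j$, which is linear rather than quadratic in the relevant small quantities. So the ``crux'' you identify is indeed the obstacle, and your proposal does not overcome it. For completeness: the paper's own error bookkeeping retains a term $2T\epsilon$, linear in $\epsilon$ with the large coefficient $T=\sum_i X_i^{(g)}$, before declaring the total ``of the order of $\epsilon^2$'', so the target statement is itself best read as a heuristic; but judged as a proof, your proposal leaves both of the above gaps open.
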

\begin{proof}
Absolute difference between sums of scores $g$-th and $h$-th sub-tests
is minimised in our method (Equation~\ref{eqn:inside}), with
\begin{equation}
\displaystyle{\sum\limits_{i=1}^{N} X_i^{(g)}}= \displaystyle{\sum\limits_{i=1}^{N} X_i^{(h)}}\pm\epsilon. \nonumber
\end{equation} 
For our method, $\epsilon$ is small. Thus we state:
\begin{equation}
\displaystyle{\sum\limits_{i=1}^{N} X_i^{(g)}} \approx \displaystyle{\sum\limits_{i=1}^{N} X_i^{(h)}},
\label{eqn:basic}
\end{equation} 
so that $T:=\displaystyle{\sum\limits_{i=1}^{N} X_i^{(g)}}\Longrightarrow \displaystyle{\sum\limits_{i=1}^{N} X_i^{(h)}}=T \mp \epsilon.$

Here we define
$$X_i^{(g)}=\displaystyle{\sum\limits_{j=1}^{n/2} X_i^{(g_j)}}$$
and similarly, we define $X_i^{(h)}$, $\forall i=1,\ldots,N$.

Now,
\begin{eqnarray}
\displaystyle{\left(\sum\limits_{i=1}^{N} X_i^{(g)}\right)^2 } 
&=& \displaystyle{\sum\limits_{i=1}^{N}\left(X_i^{(g)}\right)^2}+\nonumber \\
&& \displaystyle{X_1^{(g)}X_2^{(g)} + \ldots + X_1^{(g)}X_N^{(g)} +} \nonumber \\
&& \ldots + \nonumber \\
&& \displaystyle{X_N^{(g)}X_1^{(g)} + \ldots + X_N^{(g)}X_{N-1}^{(g)}} \nonumber \\
&=& \displaystyle{\sum\limits_{i=1}^{N}\left(X_i^{(g)}\right)^2+
\sum\limits_{i=1}^{N}\left[\sum\limits_{k=1; k\neq i}^{N} X_i^{(g)}X_k^{(g)}\right]} \nonumber \\
\label{eqn:1}
\end{eqnarray}

Now, for $k\neq i$,
\begin{eqnarray}
X_i^{(g)}X_k^{(g)}&=& \displaystyle{\left(X_i^{(g_1)}+\ldots+X_i^{(g_{n/2})}\right)\left(X_k^{(g_1)}+\ldots+X_k^{(g_{n/2})}\right)}\nonumber \\
&=& \displaystyle{X_i^{(g_1)}X_k^{(g_1)} + X_i^{(g_1)}X_k^{(g_2)} +\ldots + X_i^{(g_1)}X_{k}^{(g_{n/2})}+} \nonumber \\
&&\ldots +\nonumber \\
&& \displaystyle{X_i^{(g_{n/2})}X_k^{(g_1)} + X_i^{(g_{n/2})}X_k^{(g_2)} +\ldots + X_i^{(g_{n/2})}X_{k}^{(g_{n/2})}+} \nonumber \\
&=& \displaystyle{\sum\limits_{j=1}^{n/2}\left[\sum\limits_{j'=1}^{n/2}\left(\mbox{number of times each of $X_i^{(j)}$ and $X_k^{(j')}$ is 1}\right)\right]}\nonumber \\
&=& \displaystyle{\left[\sum\limits_{j=1}^{n/2}\left(\mbox{number of times $X_i^{(g_j)}=1$}\right)\right] \left[\sum\limits_{j=1}^{n/2}\left(\mbox{number of times $X_k^{(g_j)}=1$}\right)\right]}\nonumber \\
&\approx& (n/2)^2 \displaystyle{\left[\sum\limits_{j=1}^{n/2} \Pr(X_i^{(g_j)}=1)\right]\left[\sum\limits_{j=1}^{n/2} \Pr(X_k^{(g_j)}=1)\right]}
\label{eqn:2}
\end{eqnarray}
Here $X_i^{(g_j)}$ is the score of the $i$-th examinee in the $j$-th item of the $g$-th sub-test and can attain values of either 1 or 0, with probability $p_i^{(g_j)}$ or 1-$p_i^{(g_j)}$ respectively. Thus,
$X_i^{(g_j)}\sim{\textrm{Bernoulli}}(p_i^{(g_j)})$, i.e. $\Pr(X_i^{(g_j)}=1)=p_i^{(g_j)}$. The approximation in Equation~\ref{eqn:2} stems from approximating the probability for an event, with its relative frequency.
Then following Equation~\ref{eqn:2}, we get
\begin{eqnarray}
\displaystyle{\sum\limits_{i=1}^{N}\sum\limits_{k=1; k\neq i}^{N} X_i^{(g)}X_k^{(g)}}
&\approx & (n/2)^2 \displaystyle{\sum\limits_{i=1}^{N}\sum\limits_{k=1; k\neq i}^{N}
\left[\sum\limits_{j=1}^{n/2} p_i^{(g_j)}\sum\limits_{j=1}^{n/2} p_k^{(g_j)}\right]}\nonumber \\
&=& (n/2)^2 \displaystyle{\sum\limits_{i=1}^{N}\left[
\left(\sum\limits_{j=1}^{n/2} p_i^{(g_j)}\right)\left(\sum\limits_{k=1; k\neq i}^{N}\sum\limits_{j=1}^{n/2} p_k^{(g_j)}\right)\right]}
\label{eqn:3}
\end{eqnarray}
Now, Equation~\ref{eqn:basic} implies that 
\begin{eqnarray}
\displaystyle{\sum\limits_{i=1}^{N}\left[\sum\limits_{j=1}^{n/2} X_i^{(g_j)}\right]} &\approx& \displaystyle{\sum\limits_{i=1}^{N}\left[\sum\limits_{j=1}^{n/2} X_i^{(h_j)}\right]}\quad{\mbox{i.e.}}\nonumber \\
\displaystyle{\sum\limits_{i=1}^{N}\left[\sum\limits_{j=1}^{n/2} \left(\mbox{number of times $X_i^{(g_j)}=1$}\right)\right]} &\approx& \displaystyle{\sum\limits_{i=1}^{N}\left[\sum\limits_{j=1}^{n/2} \left(\mbox{number of times $X_i^{(h_j)}=1$}\right)\right]}\quad{\mbox{i.e.}}\nonumber \\
\displaystyle{\sum\limits_{i=1}^{N}\left[\sum\limits_{j=1}^{n/2} p_i^{(g_j)}\right]} &\approx& \displaystyle{\sum\limits_{i=1}^{N}\left[\sum\limits_{j=1}^{n/2} p_i^{(h_j)}\right]} 
\label{eqn:4}
\end{eqnarray}
Then if we delete any 1 out of the $N$ examinees, over which the outer summation on the RHS and LHS of the last approximate equality is carried out, it is expected that the approximation expressed in statement~\ref{eqn:4} would still be valid. This is especially the case if $N$ is large. In other words, bigger the $N$, smaller is the distortion affected on the structure of the sub-tests generaetd by splitting the test data obtained after deleting the score of any 1 of the $N$ examinees from the original full test data. Then using statement~\ref{eqn:4} for a large $N$, we can write
\begin{equation}
\displaystyle{\sum\limits_{k=1, k\neq i}^{N}\left[\sum\limits_{j=1}^{n/2} p_k^{(g_j)}\right]} \approx \displaystyle{\sum\limits_{k=1;k\neq i}^{N}\left[\sum\limits_{j=1}^{n/2} p_k^{(h_j)}\right]}
\label{eqn:5}
\end{equation}
where  
$$\displaystyle{\sum\limits_{i=1}^{N}\left[\sum\limits_{j=1}^{n/2} p_i^{(g_j)}\right]}= \displaystyle{\sum\limits_{i=1}^{N}\left[\sum\limits_{j=1}^{n/2} p_i^{(h_j)}\right] \pm \epsilon'},$$
follows from Equation~\ref{eqn:basic} that suggests that $\displaystyle{\sum\limits_{i=1}^{N} X_i^{(g)}} = \displaystyle{\sum\limits_{i=1}^{N} X_i^{(h)}}\pm \epsilon$. Thus $\epsilon\in{\mathbb Z}_{\geq 0}$ and $\epsilon'\in{\mathbb R}_{\geq 0}$ such that $\epsilon \geq \epsilon'$, given that $\epsilon'$ is the absolute difference between sums of probability of correct response in the $g$-th sub-test and that in the $h$-th sub-test while $\epsilon$ is the absolute difference between sums of scores in the two sub-tests.

In other words, if we define the sum of probabilities of correct
response in the $g$-th sub-test, $T'$, as
$$T' := \displaystyle{\sum\limits_{k=1, k\neq i}^{N}\left[\sum\limits_{j=1}^{n/2} p_k^{(g_j)}\right]},$$ then
\begin{equation}
\displaystyle{\sum\limits_{k=1, k\neq i}^{N}\left[\sum\limits_{j=1}^{n/2} p_k^{(h_j)}\right]}\approx T'\mp \epsilon'.
\label{eqn:7}
\end{equation}

Using this, for sub-test $g$, in the last line of Equations~\ref{eqn:3} we get
\begin{equation}
\displaystyle{\sum\limits_{i=1}^{N}\sum\limits_{k=1; k\neq i}^{N} X_i^{(g)}X_k^{(g)}} = (n/2)^2 T'\displaystyle{\sum\limits_{i=1}^{N}\left[
\left(\sum\limits_{j=1}^{n/2} p_i^{(g_j)}\right)\right]}.
\label{eqn:6}
\end{equation}
For sub-test $h$, 
$$
\displaystyle{\sum\limits_{i=1}^{N}\sum\limits_{k=1; k\neq i}^{N} X_i^{(h)}X_k^{(h)}} \approx (n/2)^2 T'\displaystyle{\sum\limits_{i=1}^{N}\left[
\left(\sum\limits_{j=1}^{n/2} p_i^{(h_j)}\right)\right]},
$$
where the approximation in the above statement is of the order as in statement~\ref{eqn:6}, enhanced by $\mp(n/2)^2 T'\epsilon'$, following statement~\ref{eqn:7}. Then
\begin{equation}
\displaystyle{\sum\limits_{i=1}^{N}\sum\limits_{k=1; k\neq i}^{N} X_i^{(g)}X_k^{(g)}} \approx 
\displaystyle{\sum\limits_{i=1}^{N}\sum\limits_{k=1; k\neq i}^{N} X_i^{(h)}X_k^{(h)}}, 
\label{eqn:8}
\end{equation}
where the approximation is of the order of $(n/2)^2 T'\epsilon'$. 

Since $\displaystyle{\left(\sum\limits_{i=1}^{N} X_i^{(g)}\right)^2 }- \displaystyle{\sum\limits_{i=1}^{N}\left(X_i^{(g)}\right)^2}=\displaystyle{\sum\limits_{i=1}^{N}\sum\limits_{k=1; k\neq i}^{N} X_i^{(g)}X_k^{(g)}}$, as in the last line of Equations~\ref{eqn:1}--statement~\ref{eqn:8} tells us that for the $g$-th and the $h$-th sub-tests,
\begin{equation}
\displaystyle{\left(\sum\limits_{i=1}^{N} X_i^{(g)}\right)^2 }- \displaystyle{\sum\limits_{i=1}^{N}\left(X_i^{(g)}\right)^2} \approx
\displaystyle{\left(\sum\limits_{i=1}^{N} X_i^{(h)}\right)^2 }- \displaystyle{\sum\limits_{i=1}^{N}\left(X_i^{(h)}\right)^2},
\label{eqn:lastbut1}
\end{equation}
where, as for statement~\ref{eqn:8}, the approximation is of the order of $(n/2)^2 T'\epsilon'$. 
But by squaring both sides of Equation~\ref{eqn:basic} we get $\displaystyle{\left(\sum\limits_{i=1}^{N} X_i^{(g)}\right)^2}\approx \displaystyle{\left(\sum\limits_{i=1}^{N} X_i^{(g)}\right)^2}$, where the approximation is of the order of $\epsilon^2\mp 2 T\epsilon$. Then in statement~\ref{eqn:lastbut1} we get
\begin{equation}
\displaystyle{\sum\limits_{i=1}^{N}\left(X_i^{(g)}\right)^2} \approx
\displaystyle{\sum\limits_{i=1}^{N}\left(X_i^{(h)}\right)^2},
\end{equation}
the approximation in which is of the order of $\epsilon^2\mp 2T\epsilon \pm (n/2)^2 T'\epsilon'$, i.e. of the order of $\epsilon^2$.
\end{proof}

\begin{remark}
{It is to be noted in the proof of Theorem~\ref{theorem:vars} that even if the sum of scores of the two sub-tests are equal, as per our method of splitting, i.e. even if $\epsilon$ is achieved to be 0, absolute difference between sums of squares of scores in the two sub-tests is not necessarily 0, since $T$ and $(n/2)^2 T'$ are not necessarily equal.}
\end{remark}

Now, variance of sub-tests $g$ and $h$ is respectively, 
$\displaystyle{\sum\limits_{i=1}^{N}
  \frac{\left(X_i^{(g)}\right)^2}{N} - {\bar{X}}_g^2}$ and
$\displaystyle{\sum\limits_{i=1}^{N}
  \frac{\left(X_i^{(h)}\right)^2}{N} - {\bar{X}}_h^2}$. Then near-equality of sums of squares of sub-test scores, (Theorem~\ref{theorem:vars}) and near-equality of means (Theorem~\ref{theorem:means}) imply that the difference between sub-test variances is small.
Empirical confirmation of
near-equality of sub-test variances is presented in later sections.
Now item scores manifest item difficulty. Therefore, splitting using
item scores is equivalent to splitting using item difficulty values.
From the near-equality of variances it follows that, if instead of
allocating items into the two sub-tests, on the basis of the
difficulty value of the item, we split the test according to item
variance, we would get nearly the same sub-tests. Thus our method of
dichotomisation with respect to item scores (or difficulty values), is
nearly equivalent to dichotomisation with respect to item variance.

The near-equality of means and variances of the sub-tests are
indicative of the sub-tests being nearly parallel, since parallel
tests have equal means and variances. Then the item scores of the two
sub-tests can be taken as coming from nearly the same populations with
nearly same density functions having two parameters, namely mean and
variance.

For parallel sub-tests, variances are equal, and the regression of the
item score vector of the $g$-th sub-test on the $h$-th, coincides with
the regression of the scores of the $h$-th sub-test on that of the
$g$-th; such coinciding regression lines imply that the Pearson
product moment correlation $r_{gh}$ between $\{X_i^{(g)}\}_{i=1}^N$
and $\{X_i^{(h)}\}_{i=1}^N$, or the split-half regression
coefficient $r_{gh}$, is maximal, i.e. unity \cite{stepniak_2009}. Our
method of splitting the test results in nearly parallel sub-tests, so
that the split-half regression is close to, but less than unity. The
closer $\vert{\cal S}\vert$ is to 0, the more parallel are the
resulting sub-tests $g$ and $h$, and higher is the value of the
attained $r_{gh}$. Thus, the method gives a simple way of splitting a
test in halves, while ensuring that the split-half reliability
$r_{tt}$ is maximum.

The iterative method also ensures that the two sub-tests are almost
equi-correlated with a third variable. Thus, these (near-parallel)
sub-tests will have almost equal validity.  

The problem of splitting an even number of non-negative item scores
into two sub-tests with the same number of items, such that absolute
difference between sums of sub-test item scores is minimised, is a
simpler example of the partition problem that has been addressed in
the literature \cite[among others]{hayes,borgs_2001}. Detailed
discussion of these methods is outside the scope of this paper, but
within the context of test dichotomisation aimed at computing
reliability, we can see that the method of assigning even numbered
items to one sub-test and odd-numbered ones to another as a method of
splitting a test into two halves, cannot yield sub-tests that are as
parallel as the sub-tests achieved via our method of splitting,
because an odd-even assignment does not guarantee that the sum of item
scores in one sub-test approaches that of the other closely, unless
the difficulty value of all items are nearly equal. Thus, a test in
which consecutive items are of widely different difficulty values,
will yeld sub-tests with means that are far from each other, if the
test is dichotomised using an odd-even method of assignment of items
to the sub-tests; in comparison, our method of splitting is designed
to yield sub-tests with close values of variances and even closer
means. Following on from the low order of our algorithm (see
Remark~\ref{remark:remark1}), a salient feature of our method is that
it is a fast method, the order of which depends on the number of items
in the test and not the examinee number, thus allowing for fast
splitting of the test score data and consequently, fast computation of
reliability.


\subsection{Estimation of True Score}
\label{sec:true_score}
\noindent

When the reliability is computed according to the
method laid out above, it is possible to perform interval estimation
of true score of an individual whose obtained score is known, where
the interval itself represents the ±1 standard deviation of the
distribution of the error score $E$. In other words, the
distribution of the error score $E$ is assumed--normal with zero mean
and variance $S_E^2 = S_X^2 (1-r_{tt})$ (as per the classical
definition of reliability); using the error variance, errors on the
estimated true scores are given as ±$S_E$. Below we discuss this
choice of model for the interval on the estimate of the true scores.

At a given observed score $X$, the true score can be estimated using the linear regression:
\begin{equation}
\hat{T}=\alpha_1 + \beta_1 X 
\label{eqn:te}
\end{equation}
where the regression coefficients are
$\beta_1=r_{XT}\displaystyle{\frac{S_T}{S_X}}=r_{tt}$ and
$\alpha_1={\bar{X}}(1-\beta_1)$. 

That the error on the estimated true score is given by $\pm S_E$, is
indeed a model choice \cite{klinebook} and is in fact, an
over-estimation given that the error $\varepsilon_i$ of estimation is
$$\varepsilon_i^2=(\alpha_1 + \beta_1 X_i - X_i)^2 =
(\bar{X}(1-r_{tt}) + r_{tt} X_i - X_i)^2=
[(X_i-\bar{X})(1-r_{tt})]^2$$ (recalling that
$\alpha_1=\bar{X}(1-r_{tt})$ and $\beta_1=r_{tt}$). Then
$$\displaystyle{\frac{\sum\limits_{i=1}^{N} \varepsilon_i^2}{N}}=(1 -
r_{tt})^2 S_X^2 \leq (1 - r_{tt}) S_X^2 = S_E^2,$$ i.e.  $S_E^2 -
\displaystyle{\frac{\sum\limits_{i=1}^{N} \varepsilon_i^2}{N}}= (1 -
r_{tt}) S_X^2 - (1 - r_{tt})^2 S_X^2=r_{tt} (1 - r_{tt}) S_X^2$. In
other words, the standard error of prediction of the true score from a
linear regression model, falls short of the error variance $S_E^2$ by
the amount $r_{tt}(1- r_{tt})S_X^2$. Thus, the higher the reliability
$r_{tt}$ of the test, the less is the difference between the standard
error of prediction and our model of the error (on ${\hat{T}}$). In
general, our model choice of the error on ${\hat{T}}$ is higher than the standard error of prediction, for a given $X$. 


%


Using ${\hat{T}}$ of each individual taking the test, one may
undertake computation of the probability that the percentile true
score of the $i$-th examinee is $t$ given the observed percentile
score of the examinee is $x$ and reliability is $r$, i.e. $\Pr(T \leq
t\vert r_{tt}=r, X\leq x)$.  We illustrate this in our analysis of
real and simulated test score data.


In the context of estimating true scores using a computed reliability, we realise that using the split-half reliability $r_{gh}$ to estimate true scores ${\widehat{T}}_{split-half}$ will be in excess of the estimate ${\widehat{T}}_{classical}$ obtained by using the classically defined reliability $r_{tt}$, for high values of the observed scores, and under-estimated compared to ${\widehat{T}}_{classical}$ for low values of $X$. This can be easily realised.
\begin{theorem}
\label{theorem:over}
${\widehat{T}}_{split-half} \geq {\widehat{T}}_{classical}$ for $X > {\bar{X}}$ and ${\widehat{T}}_{split-half} \leq {\widehat{T}}_{classical}$ for $X < {\bar{X}}$.
\end{theorem}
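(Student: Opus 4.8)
The plan is to reduce both true-score estimates to the single linear-regression formula of Section~\ref{sec:true_score}. By Equation~\ref{eqn:te}, at observed score $X$ the estimate is $\widehat{T}=\alpha_1+\beta_1 X$ with the slope $\beta_1$ equal to the reliability coefficient that is plugged in and the intercept $\alpha_1=\bar{X}(1-\beta_1)$; equivalently $\widehat{T}=\bar{X}+\beta_1(X-\bar{X})$, where $\bar{X}$ is the full-test observed-score mean and does not depend on which reliability is used. Hence $\widehat{T}_{split-half}=\bar{X}+r_{gh}(X-\bar{X})$ and $\widehat{T}_{classical}=\bar{X}+r_{tt}(X-\bar{X})$, the two estimates differing only through their slopes.

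First I would subtract these two identities to obtain
\[
\widehat{T}_{split-half}-\widehat{T}_{classical}=(r_{gh}-r_{tt})\,(X-\bar{X}).
\]
For a fixed test and a fixed split, $r_{gh}-r_{tt}$ is a constant, so the sign of the left-hand side is the product of the sign of $r_{gh}-r_{tt}$ and the sign of $X-\bar{X}$, and it therefore flips exactly as $X$ crosses $\bar{X}$. The two assertions of the theorem (for $X>\bar{X}$ and for $X<\bar{X}$) are thus the two cases of one sign computation, and it suffices to establish the ordering of the two reliability coefficients asserted by the statement, namely $r_{gh}\ge r_{tt}$.

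The remaining, and only substantive, step is that comparison. I would carry it out by relating $r_{gh}$ to the classically defined $r_{tt}$ through the quantities of Section~\ref{sec:reliability}: using $\bar{X}_g=\bar{X}_h$, the standard decomposition of $S_X^2$ into the half-test variances and their covariance, and the error-variance expression that leads to Equation~\ref{eqn:3rdstep}, one can rewrite $r_{tt}$ in terms of $r_{gh}$ and the half-test variances and then read off the inequality. The main obstacle is the bookkeeping: one must fix whether $r_{gh}$ denotes the raw between-halves correlation or its split-half reliability value, and one must control the fact that the halves produced by our algorithm are only \emph{near}-parallel, so that $S_g^2$ and $S_h^2$ differ (by an amount of order $\epsilon^2$, by Theorem~\ref{theorem:vars}) and the means differ by order $\epsilon/N$ (Theorem~\ref{theorem:means}). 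Checking that these small discrepancies leave the sign of $r_{gh}-r_{tt}$ intact --- the delicate region being $X$ near $\bar{X}$, where $\widehat{T}_{split-half}-\widehat{T}_{classical}$ is itself small --- is the crux; once the sign is fixed, the theorem is immediate from the displayed identity and the case split on $X-\bar{X}$.
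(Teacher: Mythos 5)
Your proposal takes essentially the same route as the paper: the same substitution of $\widehat{T}=\bar{X}+\beta_1(X-\bar{X})$ for both estimators, the same subtraction yielding $(r_{gh}-r_{tt})(X-\bar{X})$, and the same case split on the sign of $X-\bar{X}$. The one ``substantive step'' you flag --- actually establishing $r_{gh}\geq r_{tt}$ --- is not carried out in the paper either; its proof simply ends with ``given that $r_{gh}\geq r_{tt}$,'' treating the ordering of the two reliability coefficients as a hypothesis, so your (admittedly unfinished) sketch of how to derive that inequality goes beyond what the paper itself does rather than falling short of it.
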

\begin{proof}
\begin{eqnarray}
{\widehat{T}}_{split-half} - {\widehat{T}}_{classical} &=& [{\bar{X}} + r_{gh}(X-{\bar{X}})] - [{\bar{X}} + r_{tt}(X-{\bar{X}})]\\ \nonumber
&=&  (X - {\bar{X}})(r_{gh} - r_{tt}) \\ \nonumber
\geq 0 \quad {\mbox{for}}\quad X > {\bar{X}} &{\mbox{and}}& \leq 0 \quad {\mbox{for}} \quad X < {\bar{X}}  
\end{eqnarray}
given that $\quad r_{gh} \geq r_{tt}$.\\
Therefore ${\widehat{T}}_{split-half} \geq {\widehat{T}}_{classical}$ for $X \geq {\bar{X}}$ and ${\widehat{T}}_{split-half} \leq {\widehat{T}}_{classical}$ for $X \leq {\bar{X}}$.
\end{proof}

\section{Reliability of a battery}
\noindent
The above method of finding reliability, as per the classical
definition, can be extended to compute the reliability of a battery of
tests. Battery scores of a set of examinees are obtained as a function
of the scores of the constituent tests. After administration of the
battery to $N$ individuals, $S_X^2$, $S_T^2$, $S_E^2$ and $r_{tt}$ of
each constituent test are known as per the method described above. The
method of obtaining the reliability of the battery will depend on
these parameters as well as on the definition of the battery
scores. Usually, battery scores are computed as the sum of the scores
in the individual constituent tests (summative scores) or as a
weighted sum of these test scores. However, the possibility of a
non-linear combination of test scores to depict battery scores cannot
be ruled out. Here we discuss the weighted sum of component test scores
after motivating the concept of summative scores; we illustrate our
method of finding the weights on real data
(Section~\ref{sec:real}). 

\subsection{Weighted sum of component tests}
\label{sec:weights}
\noindent
Suppose a battery consists of two tests: Test-1 and Test-2. Let the
$i$-th examinee's scores in the 2 constituent tests be $X_{1i}$ and
$X_{2i}$. Let this examinee's true scores and error score of the
$j$-th constituent test be $T_{ji}$ and $E_{ji}$ respectively,
$j=1,2$. Then the summative score of the $i$-th examinee is $X_i =
X_{1i} + X_{2i}$ assuming additivity and equal weights. Let
$r_{tt(1)}$ and $r_{tt(2)}$ denote reliability of the respective
constituent tests.

Now,
\begin{equation}
{\textrm var}(X)  =  {\textrm var}(X_1)  +  {\textrm var}(X_2)  +  2{\textrm cov}(X_1, X_2)
\label{eqn:varcov}
\end{equation} 
suggests
\begin{eqnarray}
\displaystyle{\sum X_{1i}X_{2i}}    &=&  \displaystyle{\sum(T_{1i} + E_{1i}) (T_{2i} + E_{2i})} \nonumber \\
& =& \displaystyle{\sum T_{1i}T_{2i}} \nonumber \\
\end{eqnarray}
since $\displaystyle{\sum T_{1i}E_{2i}}=\displaystyle{\sum T_{2i}E_{1i}}=\displaystyle{\sum E_{1i}E_{2i}}=0$.

Again
\begin{eqnarray}
{\textrm cov}(X_1, X_2) &=& \displaystyle{\frac{\sum X_{1i}X_{2i}}{N} - {\bar X}_1 {\bar X}_2 }\nonumber \\  
&=& \displaystyle{\frac{\sum T_{1i}T_{2i}}{N} - {\bar T}_1 {\bar T}_2 }\nonumber \\  
&=&{\textrm cov}(T_1, T_2)  
\label{eqn:t1t2}
\end{eqnarray}
Now  variance of true score of the battery is    
\begin{eqnarray}
S_{T(Battery)} &=& {\textrm var}(T_1) + {\textrm var}(T_2) + 2{\textrm cov}(T_1, T_2)\nonumber \\  
&=& {\textrm var}(T_1) + {\textrm var}(T_2) + 2{\textrm cov}(X_1, X_2)\nonumber \\  
&=& r_{tt(1)}S_{X_1} + r_{tt(2)}S_{X_2}  + 2{\textrm cov}(X_1, X_2)
\label{eqn:rel_bat}
\end{eqnarray}
where we have used Equation~\ref{eqn:t1t2}. Thus, reliability of a battery can be found using Equation~\ref{eqn:rel_bat}. Using these equations, 
reliability of the battery $r_{tt(battery)}$  is given by   
\begin{equation}
r_{tt(battery)} = \displaystyle{\frac{r_{tt(1)}S_{X_1} + r_{tt(2)}S_{X_2}  + 2{\textrm cov}(X_1, X_2)}{S_{X_1} + S_{X_2}  + 2{\textrm cov}(X_1, X_2)}}
\end{equation}
In general, if a battery has $K$-tests and all tests are equally important then the reliability of the battery, where the battery score is a summative score of the $K$ constituent tests, will be given by
\begin{equation}
r_{tt(battery)} = \displaystyle{\frac{\sum_{i=1}^K r_{tt(i)}S_{X_i} + 
\sum_{i=1, i\neq j}^K \sum_{j=1}^K 2{\textrm cov}(X_i, X_j)}{\sum_{i=1}^K S_{X_i} + \sum_{i=1, i\neq j}^K \sum_{j=1}^K 2{\textrm cov}(X_i, X_j)}}
\label{eqn:rel_bat_gen}
\end{equation}                             
We can extend this idea to weighted summative scores.

Consider a battery of $K$-tests with weights
$W_1, W_2, \ldots, W_k$ where $W_i > 0\forall\:i= 1,2,\ldots,k$ and
$\displaystyle{\sum_{i=1}^K W_i} = 1$.  To find reliability of such a
battery, let us consider the composite score or the battery score $Y=
\displaystyle{\sum_{i=1}^K W_i X_i}$ where $X_i$ is the score of the
  $i$-th constituent test. Clearly, ${\textrm{var}}(Y) =
  \displaystyle{\sum_{i=1}^K W_i^2{\textrm{var}}(X_i)}$.

Then
\begin{equation}
r_{tt(battery)} = \displaystyle{\frac{\sum_{i=1}^K r_{tt(i)}W_i^2 S_{X_i} + 
\sum_{i=1, i\neq j}^K \sum_{j=1}^K 2 W_i W_j {\textrm cov}(X_i, X_j)}{\sum_{i=1}^K W_i^2 S_{X_i} + \sum_{i=1, i\neq j}^K \sum_{j=1}^K 2W_i W_j {\textrm cov}(X_i, X_j)}}
\label{eqn:rel_bat_wt}
\end{equation}                             

However, a major problem in the computation of the reliability
coefficient in this fashion could be experienced in determination of the
weights. 

\subsection{Our method for determination of weights} 
\label{sec:wtdetermination}
\noindent
We propose a method towards the determination of the vector of
weights $\bW = (W_1,W_2,\ldots,W_K)^T$ with
$\displaystyle{\sum_{i=1}^K W_i} = 1$, such that the variance of the
battery score is a minimum where the battery score vector is $\bY$
with  ${\textrm{var}}(\bY) = \bW^T \bD \bW$ and $\bD$ is the
variance-covariance matrix of the component tests.  From a single
administration of a battery to $N$ examinees, the variance-covariance
matrix $\bD$ is such that the $i$-th diagonal element of this matrix
is the variance $S_{X_i}$ of the $i$-th constituent test, and the
$ij$-th off diagonal element is ${\textrm{cov}}(X_i,X_j)$.  This is
subject to the condition $\bW^T\bee=1$ where $\bee$ is the
$K$-dimensional vector, the $i$-th component of which is 1,
$\forall\:i=1,\ldots,K$. It is possible to determine the weights using
the method of Lagrangian multipliers $\lambda$.  To this effect, we
define $A := \bW^T \bD \bW + \lambda(1 - \bW^T\bee)$ and set the
derivative of $A$ taken with respect to $\bW$ and $\lambda$ to zero
each, to respectively attain:
\begin{equation}
2\bD\bW - \lambda\bee= 0 \quad\mbox{and}\quad 1 - \bW^T\bee = 0.\nonumber 
\end{equation}
Thus,
\begin{eqnarray}
\bW = \displaystyle{\frac{\lambda}{2}{\bD}^{-1}\bee}\quad&\mbox{and}&\quad 
\bW^T\bee = 1,\nonumber \\
\mbox{or}\quad \bW= \displaystyle{
\frac{\bD^{-1}\bee}{\bee^T\bD^{-1}\bee}
} \quad&\mbox{and}&\quad \lambda= \displaystyle{\frac{2}{\bee^T\bD^{-1}\bee}} 
\label{eqn:de}
\end{eqnarray}

\subsection{Benefits of this method of finding weights}
\noindent
This method of finding reliability of a battery of a number of tests
can be used in the context of summation score, difference score or
weighted sum of scores. Weights found as above have the advantage that
the battery score or composite score ($\bY$) has minimum variance.
Also, covariance between the battery score and the test score of an
individual test is a constant, i.e. ${\textrm{cov}}(Y_i,X_i)=
\displaystyle{\frac{1}{\bee^T\bD^{-1}\bee}}$ $\forall\:i$.  If the
available test scores are standardised and independent such that the
$i$-th score is $Z_i$, then weights are equal, and correlation between
$Y$ and $Z_i$ is the same as correlation between $Y$ and $Z_j$ =
$\displaystyle{\frac{1}{\sqrt{\bee^T\bR^{-1}\bee}}}$ $\forall\:i=j$,
$i,j=1,2,\ldots,k$, where $\bR$ is the correlation matrix.  In other
words, the battery score is equi-correlated with the standardised
score of each constituent test.  The method alone does not guarantee
that all weights be non-negative. This is especially true if the tests
are highly independent so that $\bR$ is too sparse. Non-negative weights
can be ensured by adding the physically motivated constraint that $W_i
\geq 0$ $\forall\: i$, in which case, the problem boils down to a
Quadratic Programming formulation. This is similar to the data-driven
weight determination method presented by Chakrabartty (2013).

If it is found desirable to get weights so that $X_i$ is
  proportional to ${\textrm{cov}}(Y, X_i)$ then it can be proved that
  $\bW$ is the eigenvector corresponding to the maximum eigenvalue of
  the variance-covariance matrix of the test scores.  
If on the other hand, we want $W_i$ as proportional to
  $\displaystyle{\frac{{\textrm{cov}}(Y,X_i)}{{\textrm{var}}(X_i)}}$,
  then $\bW = \displaystyle{\frac{\bS^{-1} U}{\bee^T \bS^{-1} U}}$
  where $\bS$ is the diagonal matrix of the standard deviations of the
  constituent tests and $U$ is the maximum eigen-value of the
  correlation matrix.

In order to find the battery scores, weights of test scores can be
found by principal component analysis, factor analysis, canonical
reliability, etc. However, it is suggested that reliability of the
composite scores be found as weighted scores
(Equation~\ref{eqn:rel_bat_wt}) where the weights are determined as per
Equation~\ref{eqn:de}.

So far, we have discussed additive models. We could also have
 multiplicative models like $Y= \displaystyle{\prod X_i^{W_i}}$ so that
 on a log-scale, we retrieve the additive model.

We illustrate our method of finding weights using real data in
Section~\ref{sec:real}.

\section{Application to simulated data}
\label{sec:simulated}
\noindent
In order to validate our method of computing the classically defined
reliability following dichotomisation of a test into parallel groups,
we use our method to find the value of $r_{tt}$ of 4 toy tests, the
scores of which are simulated from chosen models, as described
below. We simulate the 4 test data sets $D_1, D_2, D_3, D_4$ under 4
distinct model choices; the underlying standard model is that the score
$X_{i}^{(j)}$ is obtained by the $i$-th examinee to the $j$-th item in a test,
is a Bernoulli variate, i.e.
$$X_{i}^{(j)}\sim{\mbox{Bernoulli}}(p)\quad {\mbox{implying}}\quad\Pr(X_{i}^{(j)}=1)=p,\:\:\Pr(X_{i}^{(j)}=0)=1-p$$ 
where the probability $Pr(X_{i}^{(j)}=1)$, of answering the $j$-th item correctly by the $i$-th examinee
\begin{itemize}
\item is held a constant $p_i$ for the $i$-th examinee $\forall\:j=1,\ldots,n$, with $p_i$ sampled from a uniform distribution in [0,1], $\forall\:i=1,\ldots,N$, in data $D_1$. 
\item is held a constant $p_i$ for the $i$-th examinee $\forall\:j=1,\ldots,n$, with $p_i$ sampled from a Normal distribution ${\cal N}(0.5,0.2)$, $\forall\:i=1,\ldots,N$, in data $D_3$. 
\item is held a constant $p_j$ for the $j$-th item for all examinees, with $p_j$ sampled from a uniform distribution in [0,1], $\forall\:j=1,\ldots,n$, in data $D_2$. 
\item is held a constant $p_j$ for the $j$-th item for all examinees, with $p_j$ sampled from a Normal distribution ${\cal N}(0.5,0.2)$, $\forall\:j=1,\ldots,n$, in data $D_4$. 
\end{itemize}
We use $n$=50 and $N$=999 in our simulations.

Thus we realise that data sets $D_1$ and $D_3$ resemble test data in
reality, with the ability of the $i$-th examinee represented by
$p_i$. Our simulation models are restrictive though in the sense that
variation with items is ignored. Such a test, if administered, will
not be expected to have a low reliability. On the other hand, the data
sets $D_2$ and $D_4$ are toy data sets that are utterly unlike real
test data, in which the probability of correct response to a given
item is a constant, irrespective of examinee ability, and examinee
ability varies with item--equally for all examinees. A toy test data
generated under such an unrealistic model would manifest low test
variance and therefore low reliability. Given this background, we
proceed to analyse these simulated tests with our method of computing
$r_{tt}$.

We implement our method to compute reliabilities of all 4 data sets (using Equation~\ref{eqn:4thstep}). The results are given in Table~1.
\begin{table}
\renewcommand\baselinestretch{1.5}{
\caption{Table showing results of using our method of dichotomisation
  of 4 simulated test data sets, $D_1,\ldots,D_4$, into 2 parallel
  sub-tests, $g$ and $h$ , resulting in the computation of
  classically-defined reliability $r_{tt}$ of the test (see
  Section~\ref{sec:dichot} for details of our method).}}
 \label{table:table1}
\hspace*{-1cm}
{\footnotesize
\begin{tabular}{|c||c||c|c||c|c||c||c|}\hline
\hline
\multicolumn{1}{|c||}{Data set}& \multicolumn{1}{c||}{Test}& \multicolumn{2}{|c||}{Sum of item scores $X_g \& X_h$ in sub-tests} & \multicolumn{2}{|c||}{Sum of squares of item scores in sub-test} & \multicolumn{1}{c||}{$\sum_{i=1}^{N}X^{(g)}_iX^{(h)}_i$}& \multicolumn{1}{|c|}{Reliability}\\
& variance & $g$  & $h$ & $g$  & $h$ & & $r_{tt}$\\
\hline
${D_1}$ & 222.89 & 12014$\quad\quad\quad\quad$ & 12013 & 202506$\quad\quad\quad\quad$  & 201523 & 198257 & 0.9662 \\
${D_2}$ & 7.96 & 10440$\quad\quad\quad\quad$  & 10439 & 113212$\quad\quad\quad\quad$  & 112843 & 109133 & 0.02050 \\
${D_3}$ & 110.06 & 12597$\quad\quad\quad\quad$  & 12597 & 188727$\quad\quad\quad\quad$  & 189465 & 183564 & 0.8994 \\
${D_4}$ & 10.86 & 12683$\quad\quad\quad\quad$  & 12684 & 166199$\quad\quad\quad\quad$  & 166520 & 161130 & 0.0361\\
\hline
\end{tabular}
}
\vspace{-0.1in}
\end{table}

\begin{figure}[!h]
\centering
\vspace*{1cm}
  \includegraphics[width=18cm]{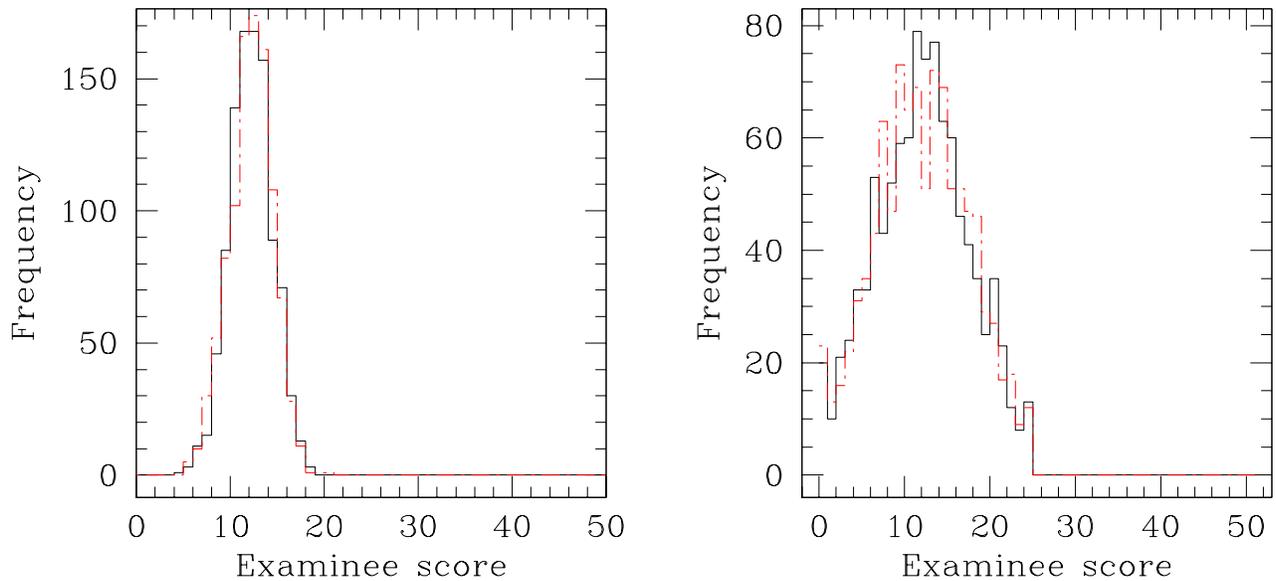}
\vspace*{-1cm}
\caption{\small{Figure showing 
    histograms of the scores of 999
    examinees in the 2 sub-tests (in solid and broken lines
    respectively), obtained by splitting the simulated test data set
    $D_3$ which has been generated under the choice that examinee
    ability is normally distributed (right panel). The left panel
    includes histograms of the 2 sub-tests that result from splitting
    the test data $D_4$ that was simulated using examinee
    ability as item-dependent, with probability for correct response
    to the $j$-th item given as a normal variate, unrealistically
    fixed for all examinees, $\forall\:j=1,\ldots, 50$.}}
\label{fig:new}
\end{figure}

The reliabilities of tests with data $D_1$ and $D_3$ are as expected
high, while the same for the infeasible tests $D_2$ and $D_4$ are low,
as per prediction. We take these results as a form of validation our method
of computing $r_{tt}$ based on the splitting of a test. 

In Figure~\ref{fig:new}, we present the histograms of the 2 sub-tests
that result from splitting the realistic data $D_3$ as well histograms
of the 2 sub-tests that result from the splitting of the unrealistic
test data $D_4$.

%


\subsection{Results from Large Simulated Tests}
\noindent
We also conducted a number of experiments with finding reliabilities of larger test data sets that were simulated. The simulations were performed such that the test score variable has a Bernoulli distribution with parameter $p$. in these simulations, we chose $p_i$ as fixed for the $i$-th examinee, with $p_i$ randomly sampled from a chosen Gaussian $pdf$, i.e. $p_i\sim{\cal N}(0.5, 0.2)$ by choice, $i=1,\ldots, N$. We simulated different test score data sets in this way, including
\begin{enumerate}
\item[--] a test data set for 5$\times$10$^5$ examinees taking a 50-item test.
\item[--] a test data set for 5$\times$10$^4$ examinees taking a 50-item test.
\item[--] a test data set for 1000 examinees taking a 100-item test.
\item[--] a test data set for 1000 examinees taking a 1000-item test.
\end{enumerate}
In each case, the test data was split using our method and reliability of the test was computed as per the classical definition. The 4 simulated test data sets mentioned above, yielded reliabilities of 0.96, 0.98, 0.93, 0.85, in order of the above enumeration. Histograms of the sub-tests obtained by splitting each test data were over-plotted to confirm their concurrence. 

Importantly, the run-time of reliability computation of these large cohorts of examinees ($N$=500,000 and 50,000), who take the 50-item long test, is very short--from about 0.8 seconds for the 50,000 cohort to about 6.2 seconds for the 500,000 cohort. On the other hand the order of our splitting algorithm being ${\cal O}((n/2)^2)$, the run-times increase rapidly for the 1000-item test, from the 100-item one, with the fixed examinee number. These experiments indicate that the computation of reliabilities for very large cohorts of examinees, in a test with a realistic number of items, is rendered very fast indeed, using our method.

\section{Application to real data}
\label{sec:real}
\noindent
We apply our method of computing reliability using the classical
definition, following the splitting of a real test into 2 parallel
halves by the method discussed in Section~\ref{sec:dichot}. We also
estimate the true scores and offer the error in this estimation, using
the computed reliability and a real test score data set. The used real
test data was obtained by examining 912 examinees in a multiple choice
examination administered with the aim of achieving selection to a
position. The test had 50 items and maximum time given was 90
minutes. To distinguish between this real test data and other real
test data sets that we will employ to determine reliability of a
battery, we refer to the data set used in this section as DATA-I. For
the real test data DATA-I, the histograms of the item scores and that
of the examinee scores are shown in Figure~\ref{figure:histos}. The
histogram of the examinee scores shows that $X_i
\leq 26\:\forall i=1,\ldots,912$, with the biggest mode of the score
distribution around 11. In other words, the test was a low-scoring
one. DATA-I indicates 2 other smaller modes at about 15
and 22. Some other parameters of the test data DATA-I are as listed below.
\begin{enumerate}
\item[-] number of items $n$=50,
\item[-] number of examinees is $N$=912,
\item[-] magnitude of the vector of maximum possible scores is 
$\parallel \bI \parallel$ = $\sqrt{912\times 50^2}\approx 1509.97$,
\item[-] magnitude of the observed score vector is 
$\parallel \bX \parallel\approx$ 357.82,
\item[-] $\parallel \bX \parallel^2$ = 128032,
\item[-] $\cos\theta_X\approx$0.9275
\item[-] test mean ${\bar X}\approx$10.99, 
\item[-] test variance $S_X^2 = \approx 19.63$.
\end{enumerate}      

\begin{figure}[!t]
\centering
\vspace*{1cm}
  \includegraphics[width=18cm]{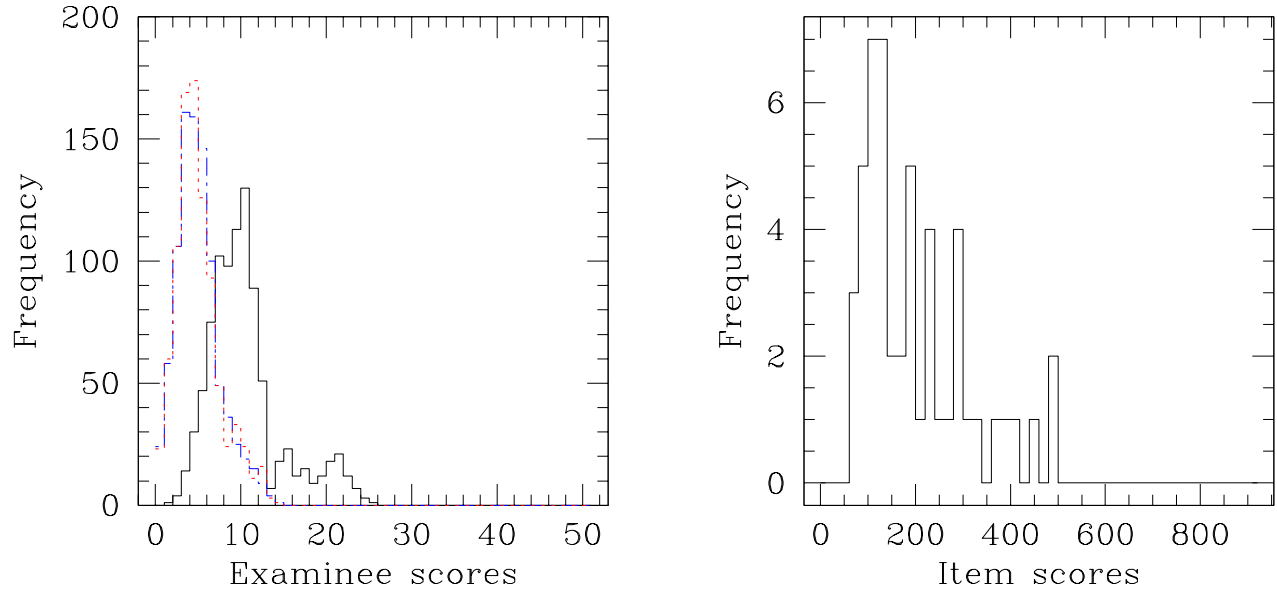}
\vspace*{-1cm}
\caption{\small{Figure showing histograms of the 912 examinee scores (left panel) and of 50 item scores (right panel) of the real test data DATA-I that we use to implement our method of splitting the test, aimed at computation of reliability as per the classical definition. The left panel also includes histograms of the $g$ and $h$ sub-tests that result from the splitting of this test; the histograms of the sub-tests are shown in dotted and dashed-dotted lines, superimposed on the histogram of examinee scores for the full test (in solid black line).}}
\label{figure:histos}
\end{figure}

\begin{figure}[!t]
\centering
  \vspace*{-4cm}
\includegraphics[width=18cm]{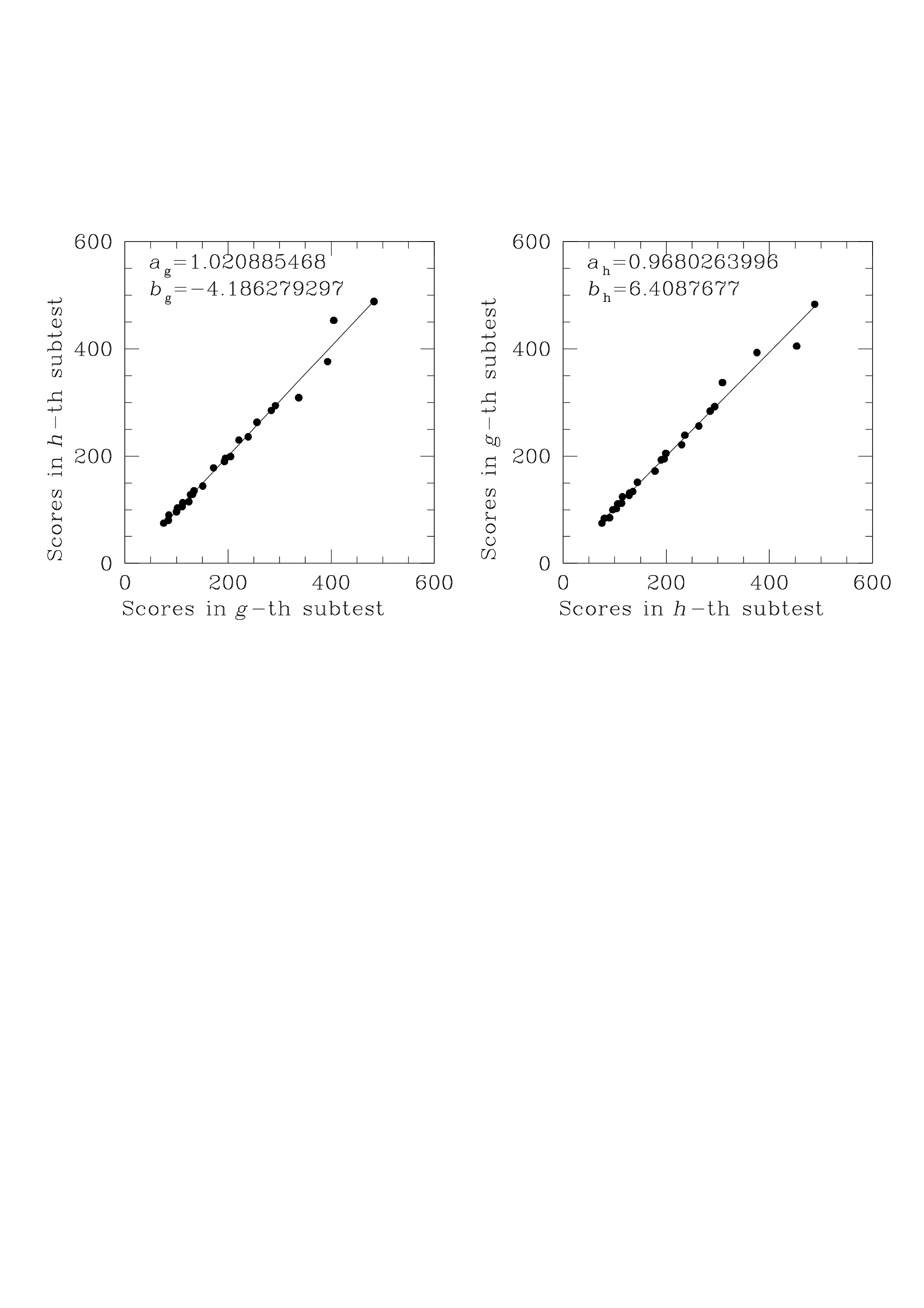}
\vspace*{-14cm}
\caption{\small{Figure showing regression of the examinee scores in the sub-test $g$ on the scores in sub-test $h$ (right panel) and vice versa (left panel), where the 2 sub-tests are obtained by splitting real data DATA-I with our method of dichotomisation (Section~\ref{sec:dichot}). The score data are shown in filled circles while the regression lines are also drawn. The regression coefficients are marked on the top left hand corner of the respective panel.} }
\label{figure:regress}
\end{figure}

The test was dichotomised into parallel halves by the iterative
process discussed in Section~\ref{sec:dichot}. Let the resulting
sub-tests be referred to as the $g$ and $h$ sub-tests. Then each of
these sub-tests had 25 items in it and sum of examinee scores is
$\displaystyle{\sum\limits_{i=1}^{912} X_i^{(g)}}=
\displaystyle{\sum\limits_{i=1}^{912}
  X_i^{(h)}}=5011$, so that the mean of each of the two sub-tests is equal to about 5.49. Also, variances of the 2 sub-tests are approximately equal at 6.81 and 6.49. 
The histograms of the examinee scores in these 2 sub-tests are drawn
in blue and red in the left panel of Figure~\ref{figure:histos}. That the
histograms for the $g$ and $h$ sub-tests overlap very well, is
indicative of these sub-tests being strongly parallel. We regress the
score vector $(X_1^{(g)},\ldots,X_{912}^{(g)})^T$ on the score vector
$(X_1^{(h)},\ldots,X_{912}^{(h)})^T$; the regression lines are shown
in the right panel of Figure~\ref{figure:regress}. Similarly, regressing
$(X_1^{(h)},\ldots,X_{912}^{(h)})^T$ on the score vector
$(X_1^{(g)},\ldots,X_{912}^{(g)})^T$ results in a similar linear
regression line (shown in the left panel of
Figure~\ref{figure:regress}). Table~2 gives the details of splitting of
the 50 items of the full test into the 2 sub-tests.

\begin{table}
\caption{Splitting of real data DATA-I of a selection test with 50 items, administered to 912 examinees. The dichotomisation has been carried out according to  the algorithm discussed in Section~\ref{sec:dichot}.}
 \label{table:table1}
\begin{center}
{\footnotesize
\begin{tabular}{|c|c||c|c||c|}\hline
\multicolumn{2}{|c||}{$g$-th sub-test} & \multicolumn{2}{c||}{$h$-th sub-test} & \multicolumn{1}{c|}{Difference between scores of 2 tests}\\
\hline
Score & Item No. & Score & Item no. & \\
\hline
75 & 25 & 75 & 1 & 0\\
84 & 39 & 80 & 46 & 4\\
85 & 43 & 90 & 24 & -5\\
100 & 20 & 96 & 9 & 4\\
102 & 44 & 103 & 34 & -1\\
111 & 50 & 106 & 31 & 5\\
112 & 5 & 113 & 41 & -1\\
124 & 32 & 115 & 45 & 9\\
127 & 36 & 128 & 8 & -1\\
131 & 33 & 129 & 26 & 2\\
134 & 18 & 135 & 3 & -1\\
151 & 29 & 144 & 4 & 7\\
172 & 7 & 178 & 48 & -6\\
193 & 19 & 190 & 21 & 3\\
195 & 37 & 196 & 14 & -1\\
205 & 28 & 199 & 47 & 6\\
221 & 30 & 230 & 6 & -9\\
239 & 27 & 236 & 49 & 3\\
256 & 11 & 263 & 12 & -7\\
284 & 2 & 285 & 17 & -1\\
292 & 22 & 294 & 10 & -2\\
337 & 15 & 309 & 23 & 28\\
393 & 13 & 376 & 42 & 17\\
405 & 16 & 453 & 38 & -48\\
483 & 35 & 488 & 40 & -5\\
\hline
Sum of item scores & 5011 & & 5011 & \\
Sum of squares of item scores & 33453 & & 33747 & \\
\hline
\end{tabular}
}
\vspace{-0.1in}
\end{center}
\end{table}

\subsection{Computation of reliability}
\noindent
We implement the splitting of the real test score data into the scores
in the $g$ and $h$ sub-tests to compute the reliability as per the
theoretical definition, i.e. as given in
Equation~\ref{eqn:4thstep}. Then using the observed sub-test score
vectors $\bX^{(g)}$ and $\bX^{(h)}$ we get the classically defined reliability to be $r_{tt} \approx 0.66$. Then
$r_{XT}:=\sqrt{r_{tt}}\approx 0.8128$. On the other hand, the Pearson
product moment correlation between $\bX^{(g)}$ and $\bX^{(h)}$ is
$r_{gh}\approx$0.9941. In other words, the split-half reliability is
$r_{gh}\approx 0.9941$.

We can now proceed to estimate the true scores following the
discussion presented in Section~\ref{sec:true_score}.  For the
observed score $X_i$, the estimated true score is
${\widehat{T}_i}=\beta_1 X_i + \alpha_1$ (as per Equation~\ref{eqn:te}
and discussions thereafter), where $\beta_1=r_{tt}$ and
$\alpha_1={\bar{X}}(1 - \beta_1)$.  The true scores can also be
estimated using the split-half reliability instead of the reliability
from the classical definition. Then the above regression coefficients
$\beta_q$ and $\alpha_q$ are computed as above, except this time,
$r_{tt}$ is replaced by $r_{gh}$, $q=1,2$. We present the true scores
estimated using both the reliability from the classical definition
(${\widehat{T}}_{classical}$ in black) as well as the split-half
reliability (${\widehat{T}}_{split-half}$ in red) in
Figure~\ref{figure:t_est}. In this figure, the errors in the estimated
true scores are superimposed as error bars on this estimate, in
respective colours. This error is considered to be $\pm S_E$, where
the error variance is $S_E^2 = (1- r_{tt})S_X^2$ when the classically defined reliability is implemented and $S_E^2 = (1- r_{gh})S_X^2$ when the split-half reliability is used.

In fact, the method of using $r_{gh}$ in the estimation of the true
scores will result in the true scores ${\widehat{T}}_{split-half}$
being over-estimated for $X>{\bar{X}}$ and under-estimated for $X <
{\bar{X}}$, as shown in Theorem~\ref{theorem:over}. This can be easily
corroborated in our results in Figure~\ref{figure:t_est}; the higher
value of $r_{gh}$ (than of $r_{tt}$) results in
${\widehat{T}}_{split-half} \geq {\widehat{T}}_{classical}$ for
$X>{\bar{X}}$ and in ${\widehat{T}}_{split-half} \leq
{\widehat{T}}_{classical}$ for $X<{\bar{X}}$.

Figure~\ref{figure:t_est} also includes the sample probability
distribution of the point estimate of the true score obtained from the
linear regression model suggested in Equation~\ref{eqn:te}, given the
observed scores and using $r_{tt}$ and $r_{gh}$, i.e. $\Pr(T \leq t\vert
r, \bX)$, where $r$ is the reliability.
We invoke the lookup table that gives the examinee indices
corresponding to a point estimate of the true score. Then the
percentile rank of all those examinees whose true score is (point)
estimated as $T$, is given as $100\times\Pr(T \leq t\vert r,
\bX)$. (This lookup table is not included in the text for brevity's
sake). For example, the true score
${\widehat{T}}_{split-half}\in[20,21)$ is estimated using $r=r_{gh}$,
  for examinees with indices 893, 867, 210, 837, 834, 408, 706, 690,
  655, 653, 161, 638, 312, 308, 149, 290, 539, 260. Then the
  percentile rank of all these examinees is about 93.
 
\begin{figure}[!t]
\centering
\includegraphics[width=14cm]{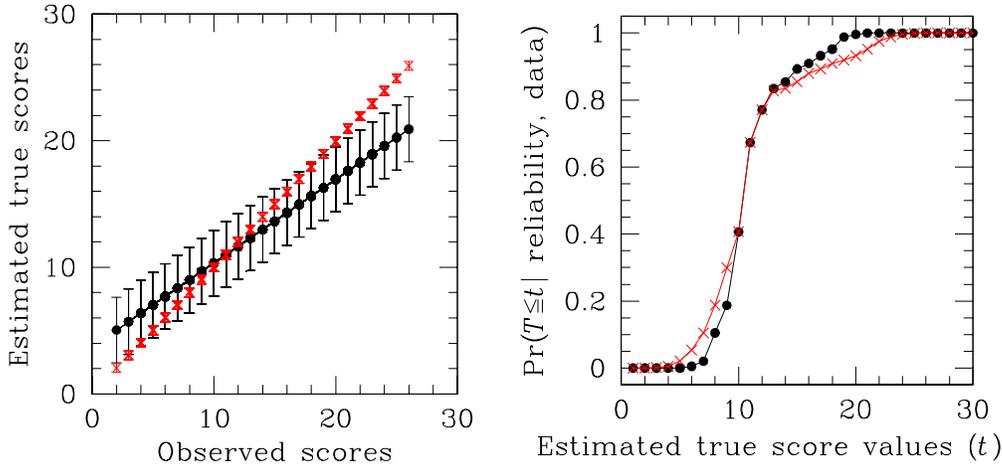}
\vspace*{-.3cm}
\caption{\small{{\it Left}: figure showing the true scores estimated at observed examinee scores of the real test data DATA-I. The true scores estimated using the reliability computed from the classical definition is shown in black filled circles; the corresponding estimates of the error scores are superimposed as error bars. The true scores estimated using the split-half reliability are plotted in red crosses, as are the corresponding error scores. {\it Right}: the sample cumulative probability distributions of point estimate of true scores obtained from the linear regression model given in Equation~\ref{eqn:te} where implementation of the classically defined reliability results in the plot in black circles while that obtained using split-half reliability is in red crosses. This cumulative distribution can be implemented to identify the percentile rank of an examinee, in conjuction with the lookup table containing the ranking of examinee indices by the point estimate of the true score.}}
\label{figure:t_est}
\end{figure}

Thus, our method of splitting the test into 2 parallel halves helps to
find a unique measure of reliability as per the classical definition,
for a given real data set. Using this we can then estimate true scores
for each observed score in the data.

\subsection{Weighted battery score using real data}
\label{sec:weightedresults}
\noindent
In order to illustrate our method of finding weights relevant to the
computation of the classically defined weighted reliability of a test
battery (Equation~\ref{eqn:rel_bat_wt}), we employ real life test data sets
DATA-II(a) and DATA-II(b). These data sets comprise the examinee scores of
784 examinees who took 2 tests (aimed at selection for a position). The
test that resulted in DATA-II(a) aimed at measuring examinee 
verbal ability, while the test for which DATA-II(b) was the result, measured
ability to interpret data. The former test contained 18 items and the latter 22
items. Histograms of examinee scores and item scores of these 2 tests
are shown in Figure~\ref{figure:histos_batt}. The mean and variance of
the 2 tests corresponding to DATA-II(a) and DATA-II(b) are
approximately 4.30, 6.25 and 4.20, 4.06 respectively. The battery
comprising these two tests is considered for its reliability.

\begin{figure}[!h]
     \centering
\includegraphics[width=18cm]{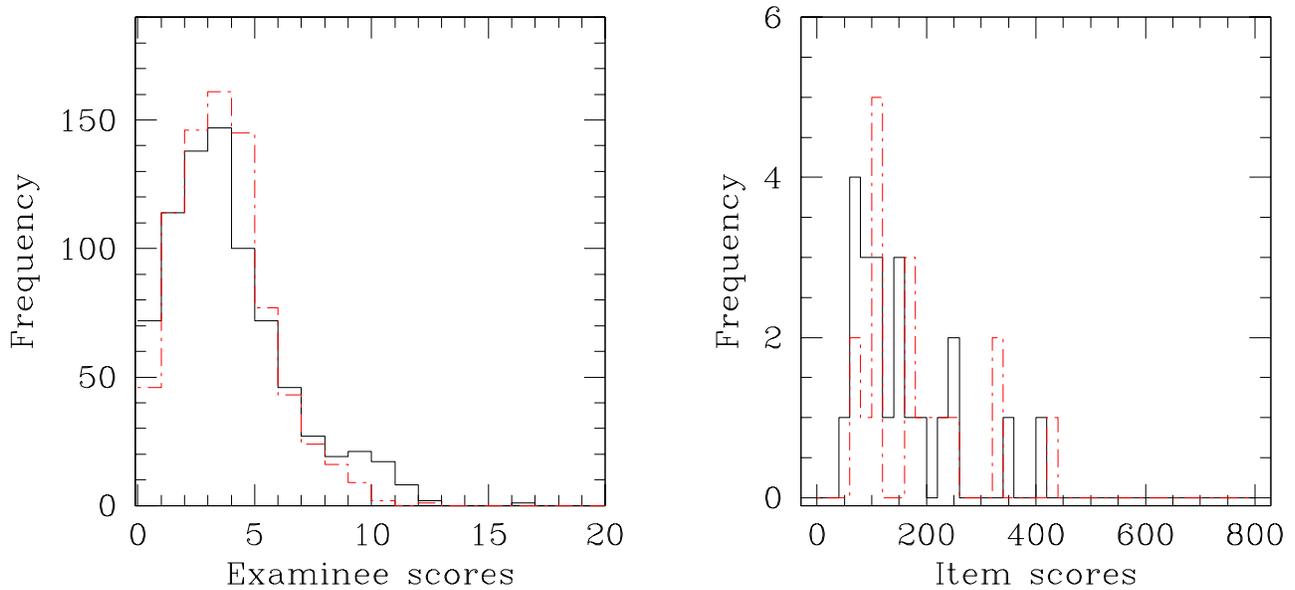}
\vspace*{-1cm}
\caption{\small{{\it Left:} figure showing histograms of the examinee scores in the tests, the scores of which constitute data sets DATA-II(a) (in black) and DATA-II(b) (in red dotted line). {\it Right:} figure showing histograms of item scores in real data DATA-II(a) (in black solid lines) and in DATA-II(b) (in red broken lines).} }
\label{figure:histos_batt}
\end{figure}

We split each test into 2 parallel halves using our method of
dichotomisation delineated in Section~\ref{sec:dichot}. Splitting
DATA-II(b) results in 2 halves, the means of the examinee scores of
which are about 2.10 and the variances of the examinee scores are
about 1.80 and 1.65. Reliability defined classically
(Equation~\ref{eqn:4thstep}) computed using these split halves is
about 0.35. The correlation between the split halves is about 0.90. On
the other hand, splitting DATA-II(a) results in 2 halves, the means of
which are equal to about 2.15 and the variances of which are about
1.87 and 2.19. Classically defined reliability of this test using
these split halves, turns out be 0.66 approximately. The correlation
coefficient between the two halves is about 0.95. The true scores
estimated for both observed data sets are shown in
Figure~\ref{figure:batt}, with errors of estimation superimposed as
$\pm S_E$ where $S_E^2$ is the variance of the error scores that are
modelled as distributed as ${\cal N}(0, S_E^2)$.

\begin{figure}[!ht]
     \centering
\includegraphics[width=10cm]{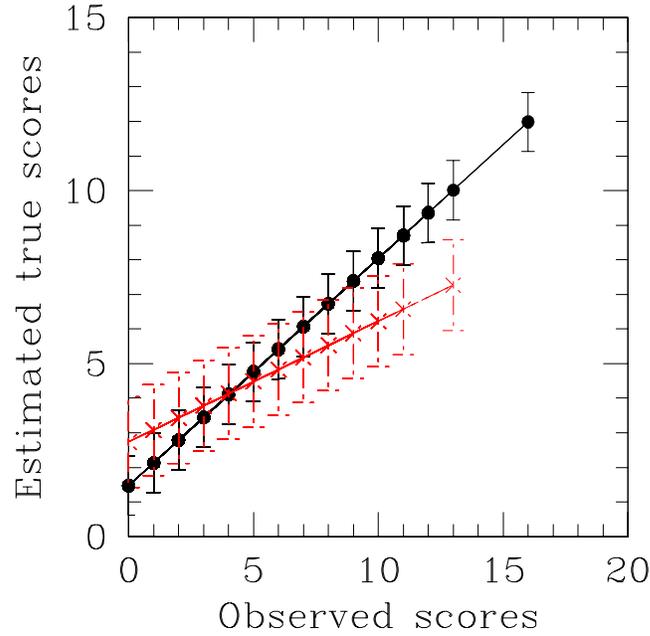}
\vspace*{-1cm}
\caption{\small{Figure showing true scores estimated using the observed scores in real data set DATA-II(a) (in black filled circles) and in DATA-II(b) (in red crosses). Errors of estimate are depicted as error bars that correspond to $\pm (1-r_{tt})^2S_X^2$ where the $r_{tt}$ is the classically defined reliability computed using the respective real data set.}}
\label{figure:batt}
\end{figure}

Using the scores in the 2 tests in the considered battery, namely sets DATA-II(a) and DATA-II(b), we compute the variance-covariance matrix $\bD$ of the observed scores (see Equation~\ref{eqn:de}). Then in this example, $\bD$ is a 2$\times$2 matrix, the diagonal elements of which are the variances of the 2 tests and the off-diagonal elements are the covariances of the examinee test score vectors $\bX_1$ and $\bX_2$ of these 2 tests. Then in this example real-life test battery, recalling that $\bee = \displaystyle{\left(1, 1\right)^T}$ we get\\
$$ \bD\approx\left( \begin{array}{cc}
6.24 & 2.46 \\
2.46 & 4.06 \end{array} \right),\quad\mbox{so that}\quad \bD^{-1}\bee \approx \displaystyle{\left(0.08, 0.20\right)^T},\quad \bee^T\bD^{-1}\bee\approx 0.28.$$
Then the weights are $\approx$ 0.7028 and 0.2972 (to 4 significant figures). 

Then recalling that (upto 4 decimal places) the reliability and test variance of the 2 tests the score data of which are DATA-II(a) and DATA-II(b) are 0.6571, 6.2405 and 0.3488, 4.0571 respectively, the covariance between the 2 tests is 2.4571, we use Equation~\ref{eqn:rel_bat_wt} to compute the reliability of this real battery to be approximately \\
$$\displaystyle{\frac{0.6571\times 6.2405\times 0.7028^2 + 
                      0.3488\times 4.0571\times 0.2972^2 + 
                      2\times 0.7028\times 0.2972\times 2.4571}
{0.7028^2\times 6.2405 + 0.2972^2\times 4.0571 + 2\times 0.7028\times 0.2972\times 2.4571}}\approx 0.7112.$$

Thus, using our method of splitting a test in two parallel halves, we
have been able to compute reliability of the test as per the classical
definition and extended this to compute the reliability of a real
battery comprising two tests. For the used real data sets DATA-II(a)
and DATA-II(b), the battery reliability turns out to be about 0.71.

\section{Summary $\&$ Discussions}
\label{sec:conclu}
\noindent
The paper presents a new, easily calculable split-half method of
achieving reliability of tests, using the classical definition where
the basic idea implemented is that the square of the magnitude of the
difference between the score vectors $\bX_g$ and $\bX_h$ of $N$
examinees in the $g$ and $h$ sub-tests obtained by splitting the full
test, is proportional to the variance $S_E^2$ of the errors in the
scores obtained by the examinees who take the test, i.e. $\parallel
\bX_g - \bX_h\parallel^2 = N S_E^2$. Here, working within the paradigm
of Classical Test Theory, the error in an examinee's score is the
difference between the observed and true scores of the examinee. Our
method of splitting the test is iterative in nature and has the
desirable properties that the sample distribution of the split halves
are nearly coincident, indicating the approximately equal means and
variances of the split halves. Importantly, the splitting method that
we use, ensures maximum split-half correlation between the split
halves and the splitting is performed on the basis of difficulty of
the items (or questions) of the test, rather than examinee
attributes. A crucial feature of this method of splitting is that the
splitting being in terms of item difficulty, the method requires very
low computational resources to split a very large test data set into
two nearly coincident halves. In other words, our method can easily be
implemented to find the classically defined reliability using test
data that is obtained by collating responses from a very large sample
of examinees, on whom a test of as large or as small a number of items
has been administered. In our demonstration of this method, a
moderately large real test on 912 examinees and 50 items, convergence
to optimum splitting (splitting into halves that share equal means and
nearly equal variances) was achieved in about 0.024
seconds. Implementation of sets of toy data, generated under different
model choices for examinee ability was undertaken: 999 examinees
responding to a 50-item test, as well as much larger cohorts of
examinees--500,000 to 50,000--taking a 50-item test, and a cohort of
1000 examinees taking 100 to 1000-item tests. The order of our
splitting algorithm is corroborated to be quadratic in half of the
number of items in the test, while computational time for reliability
computation (input+output times, in addition to splitting of the test)
varies linearly with examinee sample size, so that even for the
500,000 examinees taking the 50-item test, reliability is computed to
be less than 10 seconds. Once the reliability of the test is computed,
it is exploited to perform interval estimation of the true score of
each examinee, where the error of this estimation is modelled as the
test error variance.

Subsequent to the dichotomisation of the test, we invoke a simple
linear regression model for the true score of an examinee, given the
observed score $X$, to achieve an interval estimate of the true score,
where the interval is modelled as $\pm 1S_E$. We recognise this
interval to be in excess of $\pm 1$ standard deviation of the error of
estimation of the true score for a given $X$, as provided by the
regression model; in other words, our estimation of uncertainty on the
estimated true score is pessimistic.

This method of splitting a test into 2 parallel halves, forms the
basis of our computation of the reliability of a battery, i.e. a set
of tests, as per the classical definition. A weighted battery score is
used in this computation where we implement a new way for the
determination of the weights, by invoking a Lagrange multiplier based
solution. We illustrate the implementation of this method of
determining weights--and thereby of computing the reliability of a
test battery following the computation of the reliabilities of the
component test as per the classical definition--on a real test battery
that comprises 2 component tests.

We have presented a new method of computing reliability as per
the classical definition, and demonstrated its proficiency and
simplicity. Thus, it is possible to uniquely find test characteristics like
reliability or error variance from the data, and such can be adopted
while reporting the results of the administered test.  In this
paradigm, testing of hypothesis of equality of error variance from two
tests will help to compare the tests.




\renewcommand\baselinestretch{1.5}
\small
\bibliographystyle{ECA_jasa}


%




\end{document}